\newtheorem{thm}{Theorem}
\newtheorem{defi}{Definition}
\newtheorem{lem}[thm]{Lemma}
\newtheorem{prop}[thm]{Proposition}
\theoremstyle{definition}
\newtheorem{rem}[thm]{Remark}
\newtheorem{oppb}[thm]{Open problem}
\newcommand{\CC}{\mathbb{C}}
\newcommand{\RR}{\mathbb{R}}
\newcommand{\NN}{\mathbb{N}}
\newcommand{\ZZ}{\mathbb{Z}}
\newcommand{\cK}{\mathcal{K}}
\newcommand{\cD}{\mathcal{D}}
\newcommand{\Rt}{\RR^3}
\newcommand{\Vep}{\mathbf{V}_{\!\epsilon}}
\newcommand{\sign}{\mathop{\textrm{sign}}}
\DeclareMathOperator{\ran}{ran}
\title[Relativistic $\delta$-shell interactions]{Dirac operators and shell interactions: a survey}
\author{Thomas Ourmi\`eres-Bonafos}
\address{CNRS \& CEREMADE, Universit\'e Paris-Dauphine, PSL Research University, F-75016 Paris, France}
\email{ourmieres-bonafos@ceremade.dauphine.fr}
\urladdr{http://www.ceremade.dauphine.fr/~ourmieres/}
\author{Fabio Pizzichillo}
\address{CNRS \& CEREMADE, Universit\'e Paris-Dauphine, PSL Research University, F-75016 Paris, France}
\email{pizzichillo@ceremade.dauphine.fr}
\urladdr{http://www.ceremade.dauphine.fr/~pizzichillo/}
\begin{document}

\keywords{Dirac operator, $\delta$-shell interactions, self-adjointness, spectral theory}

\begin{abstract}In this survey we gather recent results on Dirac operators coupled with $\delta$-shell interactions. We start by discussing recent advances regarding the question of self-adjointness for these operators. Afterward we switch to an approximation question: can these operators be recovered as limits of Dirac operators coupled with squeezing potentials ?

We also discuss spectral features of these models. Namely, we recall the main spectral consequences of a resolvent formula and conclude the survey by commenting a result of asymptotic nature for the eigenvalues in the gap of a Dirac operator coupled with a Lorentz-scalar interaction.
\end{abstract}                                                            

\maketitle

\section{Introduction}
\subsection{Singular interactions in non-relativistic quantum mechanics}\label{subsec:non-relqm}
Some non-relativistic quantum systems are efficiently described by Schr\"odinger operators with singular $\delta$-type potentials supported on a zero Lebesgue measure set.

For example, such hamiltonians arise as approximations of atomic systems in strong homogeneous magnetic fields \cite{BD06} or when investigating photonic crystals with high contrast \cite{FK96}.

In this survey, we focus on the particular case of a bounded lipschitz surface  without boundary $\Sigma\subset \mathbb{R}^3$ which splits the euclidean space $\mathbb{R}^3$ into two domains $\Omega_\pm \subset \mathbb{R}^3$
\[
	\mathbb{R}^3 = \Omega_+ \cup \Omega_- \cup \Sigma.
\]
Such a surface $\Sigma$ is called a {\it shell} and we consider a hamiltonian acting in $L^2(\mathbb{R}^3)$ which formally writes
\begin{equation}\label{eqn:defschrod}
	S_\tau:= - \Delta - \tau \delta_\Sigma,
\end{equation}
where $\tau \in \mathbb{R}$ is a coupling parameter and $\delta_\Sigma$ is the distribution defined for all $\varphi \in C_0^\infty(\mathbb{R}^3)$ as
\begin{equation}\label{eqn:defdelta}
	\langle\delta_\Sigma,\varphi\rangle_{\mathcal{D}(\mathbb{R}^3),\mathcal{D}'(\mathbb{R}^3)} : = \int_\Sigma \varphi\, ds.
\end{equation}
Here, $ds$ denotes the two-dimensional Haussdorff measure on $\Sigma$.\\

\subsection*{Definition and self-adjointness} In order to investigate rigorously the operator $S_\alpha$ given in \eqref{eqn:defschrod}, one has to answer first the following two preliminary questions.
\begin{itemize}
	\item[(Q1):]\label{enum:1} How can the Schr\"odinger operator $S_\tau$ be defined rigorously ?
	\item[(Q2):]\label{enum:2} Is the Schr\"odinger operator $S_\tau$ self-adjoint ?
\end{itemize}
Both these questions are answered in \cite{BEL14} to which we refer for a rigorous and detailed approach. Nevertheless, for further purpose, we recall here the usual strategy. Start by considering the bilinear form
\[
	s_\tau [u,v] := \int_{\mathbb{R}^3} \langle \nabla u,\nabla v\rangle_{\mathbb{R}^3} dx - \tau \int_{\Sigma}u\bar{v}\, ds,\quad u,v\in H^1(\mathbb{R}^3).
\]
It is well known that this bilinear form is symmetric, densely defined, closed and semi-bounded below in $L^2(\mathbb{R}^3)$ (see, for instance, \cite[\S 2.]{BEKS94}). In particular, the Schr\"odinger operator $S_\tau$ can be properly defined as the self-adjoint operator associated to the bilinear form $s_\tau$ thanks to Kato's first representation theorem \cite[Ch. VI, Thm. 2.1]{Kat95}. In particular, it implies $\mathcal{D}(S_\tau)\subset H^1(\mathbb{R}^3)$.

Nevertheless, one could argue that such an implicit definition of $\mathcal{D}(S_\tau)$ does not fully answers question (Q1) in the sense that we do not know neither the action of the Schr\"odinger operator $S_\tau$ nor have described its domain. Actually, it can be proved that
\begin{equation}
	\left\{\begin{array}{rcl}\mathcal{D}(S_\tau)  &=& \{ u = u_+\oplus u_- \in \mathcal{D}(s_\tau) : \Delta u_\pm \in L^2(\Omega_\pm),\\&&\quad\quad\quad\quad\quad\quad\quad u_+|_\Sigma = u_-|_\Sigma,\quad \partial_n u|_\Sigma = \tau u|_\Sigma\},\\S_\tau(u_+\oplus u_-) & = & (-\Delta u_+) \oplus (-\Delta u_-),
		\end{array}\right.
		\label{eqn:defschrodrigo}
\end{equation}
where we have identified $L^2(\mathbb{R}^3)$ with $L^2(\Omega_+) \oplus L^2(\Omega_-)$ and where $\partial_n$ denotes the usual jump of the normal derivatives of $u_+$ and $u_-$ through the surface $\Sigma$.

Here, of course, as $\mathcal{D}(S_\tau) \subset H^1(\mathbb{R}^3)$ one can easily give a sense to the traces $u_\pm|_\Sigma \in H^{\frac12}(\Sigma)$ (see, for instance, \cite[Thm. 3.37]{Mcl00}). However the jump of the traces of the normal derivatives has to be understood in a weak sense, which is possible because $\Delta u_\pm \in L^2(\Omega_\pm)$.

Such jump conditions can be recovered using a naive approach. Indeed, take for instance $u \in C^\infty(\mathbb{R}^3\setminus\Sigma) \cap L^2(\RR^3)$ and apply the expression of the Schr\"odinger operator $S_\tau$ given in \eqref{eqn:defschrod} in the sense of distributions. Using the jump formula, it gives:
\begin{align}
	(-\Delta - \tau \delta_\Sigma ) u &= - \Big(\sum_{j=1}^3 (\partial_j^2u_+)\mathds{1}_{\Omega_+} + (\partial_j^2u_-)\mathds{1}_{\Omega_-}\Big)\nonumber\\&\quad + (\partial_{n_-} u_-|_\Sigma +  \partial_{n_+} u_+|_\Sigma) \delta_\Sigma - \tau u|_\Sigma \delta_\Sigma,
	\label{eqn:jumnrq}
\end{align}
where $\partial_{n_\pm}u_\pm|_\Sigma$ are the Neumann traces of $u_\pm$. For the right-hand side to belong to $L^2(\mathbb{R}^3)$ we need the following equality to hold:
\[
	\partial_n u|_\Sigma := \partial_{n_-} u_-|_\Sigma +  \partial_{n_+} u_+|_\Sigma = \tau u|_\Sigma.
\]
It is exactly the jump condition given in \eqref{eqn:defschrodrigo}. In particular the operator $S_\tau$ defined in \eqref{eqn:defschrodrigo} acts as expected in \eqref{eqn:defschrod}.\\
\paragraph{\bf Approximations of $\delta$-shell potentials}
\label{sub:approx-Schr}
From a physical point of view, a Schr\"odinger operator with a $\delta$-shell potential is an idealized hamiltonian for a quantum particle submitted to electric potential localized in a thin tubular neighborhood of the shell.

To justify this modeling, pick a function $V\in C^\infty_0(\RR^3)$ and let $V_\epsilon$ be a sequence of mollifiers such that
\[
V_\epsilon\to \Big(\int_{\RR^3} V(x) dx \Big) \delta_\Sigma,\quad\epsilon\to0.
\]
One can wonder if the family of hamiltonians $(-\Delta - V_\epsilon)_{\epsilon>0}$ has a limit when $\epsilon\to0$. These operators are self-adjoint on the domain $H^2(\RR^3)$ and for the model to be physically consistant, we would like to obtain a connection between the spectrum $Sp(-\Delta - V_\epsilon)$ of $-\Delta - V_\epsilon$ and the spectrum of its limit operator. Mathematically, this can be answered investigating the following question (see \cite[Section VIII.7]{RS1}).
\begin{itemize}
\item[(Q3):]\label{item:Q3}
For some operator topology, does the following convergence hold
\[
-\Delta-V_\epsilon\underset{\epsilon\to0}{\longrightarrow}
S_{\tau_V},
\]
for some $\tau_V \in \RR$ depending on the potential $V$ ?
\end{itemize}

Finally, one would like to know whether or not this limiting procedure allows to recover the whole range of all possible coupling constants:
\begin{itemize}
\item[(Q4):] Given an operator $S_\tau$, can it be realized as an operator $S_{\tau_V}$, for some $V\in C_0^\infty(\RR^3)$ ?
\end{itemize}

Both Questions (Q3) and (Q4) are well investigated in the literature. Let us mention the one-dimensional case studied in \cite
{AGHH12} and \cite{BEHL2017} where the case of higher dimensions is dealt with for singular perturbations on general smooth hyper-surfaces. This question is also discussed in \cite[\S 10.1]{EH15}.

The main result is a \emph{norm resolvent} convergence of the family of operators $(-\Delta - V_\epsilon)_{\epsilon>0}$ to $S_{\tau_V}$ where $\tau_V = \int_{\RR^3}V(x) dx$, answering both questions (Q3) and (Q4).\\
\paragraph{\bf Spectral theory}The structure of the spectrum of the operator $S_\tau$ attracted a lot of attention and is well understood, in particular as $\Sigma$ is compact we get
\[
	Sp_{ess}(S_\tau) = [0,+\infty),
\]
see \cite[Theorem 2.1]{BEL13} and if $\tau>0$, the interaction becomes attractive and bound states can appear below the threshold of the essential spectrum. The existence of such bound states, as well as their behavior in the strong coupling regime $\tau\to +\infty$ has been intensively investigated and we refer to \cite[Chapter 10]{EH15} and references therein for results in this direction.

\subsection{Singular interactions in relativistic quantum mechanics}
\label{sec:dirac}
The aim of this survey is to know until what extent questions (Q1)-(Q4) have been investigated for relativistic quantum particles. In this case, the Schr\"odinger operator \eqref{eqn:defschrod} is replaced by the Dirac operator that acts in $L^2(\mathbb{R}^3,\mathbb{C}^4)$ as
\begin{equation}\label{eq:def-free-dirac}
D_m:= D :=  -i \Big(\sum_{j=1}^3\alpha_j\partial_j\Big) + m\beta=-i \alpha\cdot\nabla+ m\beta 
\end{equation}
where $m\in \mathbb{R}$ is the mass of the considered particle and $\alpha_1,\alpha_2,\alpha_3,\beta \in \mathbb{C}^{4\times4}$ are the Dirac matrices
\[
	\alpha_j := \begin{pmatrix}0 & \sigma_j \\ \sigma_j & 0\end{pmatrix}, \quad \beta=\begin{pmatrix}1_2 & 0\\0 & -1_2 \end{pmatrix}.
\]
Here $\sigma_1,\sigma_2,\sigma_3 \in \mathbb{C}^{2\times2}$ are the usual Pauli matrices
\[
	\sigma_1 := \begin{pmatrix} 0&1\\1&0\end{pmatrix},\quad \sigma_2 := \begin{pmatrix} 0&-i\\i&0\end{pmatrix}, \sigma_3:= \begin{pmatrix} 1&0\\0&-1\end{pmatrix}.
\]
It is well known (see \cite[\S 1.4]{Tha92}) that $D$ is essentially self-adjoint on $C_0^{\infty}(\Rt,\CC^4)$. We denote by $D_{\rm free}$ its self-adjoint extension and $D_{\rm free}$ is called the free Dirac operator. It is defined on the domain $\cD(D_{\rm free}):=H^1(\Rt,\CC^4)$ and
\begin{equation}\label{eq:spec-free-dirac}
\operatorname{Sp}(D_{\rm free})=\operatorname{Sp}_{ess}(D_{\rm free})=(-\infty,-|m|]\cup[|m|,+\infty).
\end{equation}

The aim of this review is to gather recent advances in the study of the Dirac operator coupled with a $\delta$-shell interaction. This operator formally acts as
\begin{equation}\label{eqn:defdirac}
	D_{\tau,\eta} := D + (\tau 1_4 + \eta \beta)\delta_\Sigma,
\end{equation}
where $\tau,\eta\in\mathbb{R}$ are coupling constants.

From a physical point of view, this operator arises when one aims to study relativistic properties of spin-$1/2$ particles (such as electrons) coupled with an electrostatic potential of interaction strength $\tau$ and a Lorentz-scalar potential of interaction strength $\eta$, both localized on a shell $\Sigma$.

Now, one could ask whether the relativistic counterpart of question (Q1) has a natural answer as in the non-relativistic case evoked in \S \ref{subsec:non-relqm} and it turns out the question is actually more involved, due to the relativistic nature of the model. Indeed, the energy functional of the Dirac operator is neither bounded below nor above as it can be seen by looking at the spectrum of the free-Dirac operator \eqref{eq:spec-free-dirac}.

Consequently, the approach involving a quadratic form is not available anymore and one need to think about another rigorous strategy.

When trying to apply the program (Q1)-(Q4) to relativistic particles we can see that the answer is not as straightforward as in the non-relativistic setting and this review aims to illustrate the state of the art regarding these questions.
	
\subsection{Structure of the survey}
Section \ref{sec:defsa} is devoted to the rigorous definition of the Dirac operator coupled with both electrostatic and scalar interactions supported on a shell $\Sigma$, answering to questions (Q1) and (Q2) in the relativistic setting.

Section \ref{sec:approxproc} aims to justify that the the Dirac operator coupled with either an electrostatic or a scalar interaction supported on a shell $\Sigma$ can be approached by a sequence of squeezing potentials, answering to questions (Q3) and (Q4).

Finally, Section \ref{sec:strucspec} deals with various properties of the spectrum of this operator that can be deduced from the previous definitions and results of Section \ref{sec:defsa}. Namely, a resolvent formula is given and spectral asymptotics are obtained in the large mass limit for a pure Lorentz-scalar potential.

\section{Definition of relativistic shell interactions and self-adjointness}\label{sec:defsa}
In this section we discuss the various approach used in the past few years to define the Dirac operator with a shell interaction. \S \ref{subsec:DES} - \ref{subsec:critstreng} follow the chronological order of publications in order to emphasize on the key evolutions. Namely, we discuss the question of self-adjointness as dealt with in \cite{DES89,AMV14,BEHL18,OBV18,BH17}.

A reader only interested in the present state of the art can skip directly to \S \ref{subsec:stofart} where we sum up the main results. We also state two open problems related to the question of self-adjointness of the Dirac operator with a shell interaction.
\subsection{The spherically symmetric \texorpdfstring{$\delta$}{\$delta\$}-shell}\label{subsec:DES}
The first definition of the operator $D_{\tau,\eta}$ is given in \cite{DES89} where the special case $\Sigma= \mathbb{S}^2$ is considered. The authors look for a definition of the operator $D_{\tau,\eta}$ which preserves the spherical symmetry of the problem.
To do this, they decompose the ambient Hilbert space $L^2(\mathbb{R}^3,\mathbb{C}^4)$ in \emph{partial wave subspaces} associated to the Dirac operator. Namely, they are reduced to investigate the self-adjoint extensions of a countable family of operators of the form
\[
\left\{\begin{array}{rcl}
	\mathcal{D}(d)& = & C_0^\infty\big((0,R)\cup(R,+\infty),\mathbb{C}^2\big),\\
	d u &=&\Big(-i\sigma_2 \frac{d}{dr} + m\sigma_3 + \frac{\chi}{r}\sigma_1\Big)u,
	\end{array}\right.
\]
where these operators act in $L^2\big((0,+\infty),\mathbb{C}^2\big)$ and where $\chi\in\mathbb{Z}\setminus\{0\}$ (see \cite[\S.3]{DES89} for details).

Then, they suggest an extension $d_{\tau,\eta}$ of $d$ defined as
\[
\left\{\begin{array}{rcl}
	\mathcal{D}(d_{\tau,\eta})& = & \bigg\{u = u_+\oplus u_- \in AC(I_-,\mathbb{C}^2)\oplus \big(AC(I_+,\mathbb{C}^2)\cap L^2(I_+,\mathbb{C}^2)\big):\\&&\quad\quad\quad\quad \Big(-i\sigma_2 \frac{d}{dr} + \frac{\chi}{r}\sigma_1\Big)u_\pm \in L^2(I_\pm,\mathbb{C}^2),\\&&\quad\quad\quad\quad\quad\frac12\big(\tau 1_2 + \eta \sigma_3\big)\big(u_+(1) + u_-(1)\big) = i\sigma_2\big(u_+(1) - u_-(1)\big)\bigg\}\\
	d_{\tau,\eta} u &=&\Big(-i\sigma_2 \frac{d}{dr} + m\sigma_3 + \frac{\chi}{r}\sigma_1\Big)u_+ \oplus \Big(-i\sigma_2 \frac{d}{dr} + m\sigma_3 + \frac{\chi}{r}\sigma_1\Big)u_-,
	\end{array}\right.	
\]
where we have set $I_+ = (0,1), I_- = (1,+\infty)$.

Such a choice for the jump condition at $r=1$ is justified as follows. The distribution $\delta_{\mathbb{S}^2}$ on the shell $\mathbb{S}^2$ can be understood in the partial wave decomposition as $\delta_{\{r = 1\}}$, the  Dirac distribution at $r=1$.

However, for a function $u \in AC(I_-)\oplus AC(I_+)$, the expression $(u\delta_{\{r=1\}})$ does not make any sense {\it a priori}. Hence, in \cite[Eqn. (4.4)]{DES89} they {\it choose} to define a distribution $(u\delta_{\{r=1\}})$ as:
\begin{equation}\label{eqn:firstsymshell}
	u\delta_{\{r=1\}} = \frac12\big(u_+(1) + u_-(1)\big)\delta_{\{r=1\}}.
\end{equation}
We emphasize on the fact that this is a choice. The $\delta$-shell is said to be symmetric because each boundary term $u_\pm(1)$ is considered with a coefficient $\frac12$ and although they may not be physically interesting, asymmetric $\delta$-shell can also be considered (see \cite[Appendix]{DES89}).

Thus, the strategy of Dittricht, Exner and $\check{\text{S}}$eba gives a natural answer to (Q1) in the relativistic setting. The operator $D_{\tau,\eta}$ is defined {\it via} a fiber decomposition in partial wave subspace and question (Q2) about self-adjointness is answered thanks to the following proposition.
\begin{prop}[{\cite[Prop. 4.1]{DES89}}] The operator $d_{\tau,\eta}$ is self-adjoint.
\end{prop}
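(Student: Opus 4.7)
The strategy is to reduce the question of self-adjointness to an algebraic check of a boundary condition at $r=1$, using the fact that the radial Dirac-type expression $-i\sigma_2 \frac{d}{dr} + m\sigma_3 + \frac{\chi}{r}\sigma_1$ is itself well-behaved at $r=0$ and $r=+\infty$. First I would introduce the minimal operator $d_0$, defined as the closure on $C_0^\infty\bigl((0,1)\cup(1,+\infty),\CC^2\bigr)$, and the maximal operator $d_{\max}$, whose domain consists of all $u=u_+\oplus u_-$ with $u_\pm$ locally absolutely continuous on $I_\pm$ and such that $\bigl(-i\sigma_2 \tfrac{d}{dr}+m\sigma_3+\tfrac{\chi}{r}\sigma_1\bigr)u_\pm \in L^2(I_\pm,\CC^2)$. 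Invoking classical Weyl limit-point/limit-circle theory for one-dimensional Dirac-type systems (e.g.\ Weidmann), one checks that for $|\chi|\geq 1$ the expression is in the limit point case at $r=0$ and, since $m\in\RR$, also at $r=+\infty$. Hence no boundary conditions are required at the endpoints $0$ and $+\infty$; the entire gap between $d_0$ and $d_{\max}$ is encoded by the traces $u_\pm(1)\in \CC^2$.

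Next, a Green identity on $I_+$ and $I_-$ (integration by parts, noting $(-i\sigma_2)^\ast=i\sigma_2$) yields
\[
\langle d_{\max} u, v\rangle - \langle u, d_{\max} v\rangle
= u_+(1)^\ast (i\sigma_2)\,v_+(1) - u_-(1)^\ast (i\sigma_2)\,v_-(1),
\]
the endpoint contributions at $r=0$ and $r=+\infty$ being zero by the limit point property. This identifies the boundary space with $\CC^2\oplus\CC^2$ endowed with the Hermitian form $B\bigl((a,b),(c,d)\bigr):=a^\ast(i\sigma_2)c-b^\ast(i\sigma_2)d$, which is non-degenerate of signature $(2,2)$, so that self-adjoint extensions of $d_0$ correspond bijectively to Lagrangian (i.e.\ maximal isotropic, $2$-dimensional) subspaces of $B$.

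The core computation is to verify that the transmission condition
\[
\tfrac12(\tau 1_2 + \eta\sigma_3)\bigl(u_+(1)+u_-(1)\bigr) = i\sigma_2\bigl(u_+(1)-u_-(1)\bigr)
\]
carves out precisely such a Lagrangian subspace. Introducing $s=u_+(1)+u_-(1)$, $t=u_+(1)-u_-(1)$ and likewise $s',t'$ for $v$, a direct expansion gives
\[
B(u,v)=\tfrac12\bigl(s^\ast(i\sigma_2)t' + t^\ast(i\sigma_2)s'\bigr).
\]
Using $(i\sigma_2)^{-1}=-i\sigma_2$ and the identities $i\sigma_2 t = \tfrac12(\tau+\eta\sigma_3)s$ and $i\sigma_2 t'=\tfrac12(\tau+\eta\sigma_3)s'$, one rewrites both summands in terms of $s^\ast(\tau+\eta\sigma_3)s'$ with opposite signs; because $\tau+\eta\sigma_3$ is Hermitian the two contributions cancel, so $B(u,v)=0$ on $\dom(d_{\tau,\eta})$. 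This proves symmetry. The isotropic subspace has dimension $2$ (the transmission equation is a rank-$2$ linear condition on $\CC^2\oplus\CC^2$), hence it is Lagrangian, and consequently $d_{\tau,\eta}$ is a self-adjoint extension of $d_0$.

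The main obstacle I anticipate is the endpoint analysis at $r=0$: the $\chi/r$ term is a genuine Coulomb-type singularity, and one has to be careful to invoke the correct limit-point criterion for first-order Dirac systems (using $\chi\in\ZZ\setminus\{0\}$, so $|\chi|\geq 1$) to ensure that the traces $u_\pm(1)$ really are the only data controlling the extension theory. Once that is granted, everything reduces to the two-by-two algebra above, and the Hermiticity of $\tau 1_2 + \eta\sigma_3$ makes the Lagrangian property transparent.
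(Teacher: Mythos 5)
Your argument is correct and is essentially the approach behind the cited result: the survey itself gives no proof (it simply quotes \cite[Prop.~4.1]{DES89}), and the original proof likewise proceeds by self-adjoint extension theory, with the limit-point property at $r=0$ (valid since $|\chi|\geq 1$) and at $r=+\infty$ reducing everything to the boundary form at $r=1$ and to checking that the transmission condition defines a maximal isotropic (two-dimensional) subspace. Your rank-two claim for the transmission condition indeed holds for all $\tau,\eta$, since the two $2\times 2$ blocks $\tfrac12(\tau 1_2+\eta\sigma_3)\mp i\sigma_2$ differ by the invertible matrix $2i\sigma_2$, so no gap remains.
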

Remark that this strategy defines the domain $\mathcal{D}(D_{\tau,\eta})$ as the direct sum of the domains of the (countable) partial wave operators $\mathcal{D}(d_{\tau,\eta})$ and every information on the Sobolev regularity of functions in $\mathcal{D}(D_{\tau,\eta})$ is implicitly hidden in this decomposition.
\subsection{General shells}\label{subsec:genshell}
Let us describe the approach of \cite{AMV14}, where the authors study the case of a general Lipschitz shell $\Sigma$. Consider the minimal Dirac operator
\[
	\left\{\begin{array}{rcl}\mathcal{D}(D_{\min}) &=& C_0^\infty(\mathbb{R}^3\setminus\Sigma,\mathbb{C}^4),\\ D_{\min} u &=& (-i\alpha\nabla + m\beta)u.
	\end{array}\right.
\]
This operator is symmetric and we consider its adjoint, the maximal operator $D_{\max}$, defined as
\[
	\left\{\begin{array}{rcl}\mathcal{D}(D_{\max}) &=& \{u \in L^2(\mathbb{R}^3,\mathbb{C}^4) : (\alpha\cdot\nabla)(u_\pm) \in L^2(\Omega_\pm,\mathbb{C}^4)\}\\ D_{\max} u &=& \big((-i\alpha\cdot\nabla + m\beta)u_+\big)\oplus\big((-i\alpha\cdot\nabla + m\beta)u_-\big),
	\end{array}\right.
\]
where once again we have set $u_\pm := u\mathds{1}_{\Omega_\pm}$ and identified $L^2(\Omega_+,\mathbb{C}^4) \oplus L^2(\Omega_-,\mathbb{C}^4)$ with $L^2(\mathbb{R}^3,\mathbb{C}^4)$.

Their main idea is to define the operator $D_{\tau,\eta}$ on a subdomain of $\mathcal{D}(D_{\max})$ providing extra jump conditions through the shell $\Sigma$.\\

\paragraph{\underline{First step}}
They try to give an accurate description of the domain $\mathcal{D}(D_{\max})$ using various integral operators involving a fundamental solution $\phi$ of the free Dirac operator $D_{\rm free}$.
\begin{prop}[{\cite[Lemma 3.1]{AMV14}}] Let $m>0$. A fundamental solution of the free Dirac operator $D_{\rm free}$ is given by
\[
	\phi(x) = \frac{e^{-mx|}}{4\pi|x|}\big(m\beta + (1+m|x|)i\alpha\cdot\frac{x}{|x|^2}\big),\quad \text{for } x\in\mathbb{R}^3\setminus\{0\}.
\]
\label{prop:def-fundsol}
\end{prop}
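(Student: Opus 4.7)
The plan is to exploit the classical factorization $D^2 = (-\Delta + m^2)\mathbf{1}_4$ to reduce the problem to the known fundamental solution of the Yukawa operator.

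First I would verify the identity $D^2=(-\Delta+m^2)\mathbf{1}_4$. Expanding
\[
D^2 = (-i\alpha\cdot\nabla + m\beta)^2 = -(\alpha\cdot\nabla)^2 - im\bigl((\alpha\cdot\nabla)\beta + \beta(\alpha\cdot\nabla)\bigr) + m^2\beta^2,
\]
the cross terms vanish because $\{\alpha_j,\beta\}=0$, and $\beta^2=\mathbf{1}_4$. For the first term, the anticommutation relations $\{\alpha_j,\alpha_k\}=2\delta_{jk}\mathbf{1}_4$ give $(\alpha\cdot\nabla)^2=\Delta\,\mathbf{1}_4$, yielding the desired identity.

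Next, I would recall that the Yukawa kernel
\[
G_m(x) := \frac{e^{-m|x|}}{4\pi|x|}, \quad x\in\RR^3\setminus\{0\},
\]
is the fundamental solution of $-\Delta+m^2$ on $\RR^3$, i.e.\ $(-\Delta+m^2)G_m=\delta_0$ in $\cD'(\RR^3)$. Consequently, by the factorization step, the matrix-valued distribution $\phi := D\bigl(G_m\mathbf{1}_4\bigr)$ satisfies
\[
D\phi = D^2(G_m\mathbf{1}_4) = (-\Delta+m^2)G_m\,\mathbf{1}_4 = \delta_0\,\mathbf{1}_4,
\]
so $\phi$ is a fundamental solution of $D_{\mathrm{free}}$.

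The only remaining step is to carry out the differentiation and check it reproduces the formula claimed. A direct computation on $\RR^3\setminus\{0\}$ gives
\[
\partial_j G_m(x) = -\frac{e^{-m|x|}}{4\pi|x|^2}\Bigl(m+\tfrac{1}{|x|}\Bigr)\frac{x_j}{|x|}\cdot|x|,
\]
so that $-i\alpha\cdot\nabla G_m = \frac{i e^{-m|x|}}{4\pi|x|}(1+m|x|)\,\alpha\cdot\frac{x}{|x|^2}$, and adding the term $m\beta\,G_m$ yields exactly the stated expression for $\phi(x)$.

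The argument is essentially a verification, so there is no real conceptual obstacle; the only pitfall is bookkeeping signs and the factor $|x|$ when differentiating $G_m$, and ensuring the identification $\phi=D(G_m\mathbf{1}_4)$ is understood in the sense of matrix-valued distributions, so that the distributional computation of $D\phi$ is legitimate.
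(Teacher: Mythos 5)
Your proposal is correct and follows essentially the same route as the proof in the cited reference \cite[Lemma 3.1]{AMV14} (the survey itself only states the result): factor $D^2=(-\Delta+m^2)1_4$ via the anticommutation relations, apply $D$ to the Yukawa kernel $e^{-m|x|}/(4\pi|x|)$, and check that the pointwise computation represents the distributional one, which is legitimate since $\nabla G_m$ is locally integrable near the origin. The sign and factor bookkeeping in your differentiation of $G_m$ comes out right, so nothing is missing.
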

Then, they construct a linear and bounded operator $\Phi : L^2(\Sigma,\mathbb{C}^4)\to L^2(\mathbb{R}^3,\mathbb{C}^4)$ defined as
\begin{equation}\label{eqn:bdope}
	\Phi(g)(x) := \int_\Sigma \phi(x-y) g(y) ds(y),\quad\text{for } x\in\mathbb{R}^3\setminus \Sigma,
\end{equation}
see \cite[Corollary 2.3]{AMV14}.

Remark that the operator $\Phi$ is constructed in order to have for all $g\in L^2(\Sigma,\CC^4)$ $D\Phi(g) =0$ as a distribution in $\mathcal{D}'(\Omega_\pm)$. In particular, $\Phi(g)$ is harmonic for the Dirac operator in the domains $\Omega_\pm$ and $\Phi(g)\in\mathcal{D}(D_{\max})$. Then, instead of working with the space $\mathcal{D}(D_{\max})$, they focus on its subspace $E$ defined as
\begin{equation}\label{eqn:maxdomain}
	E := \{ u + \Phi(g) : u\in H^{1}(\mathbb{R}^3,\mathbb{C}^3), g\in L^2(\Sigma,\mathbb{C}^4)\} \subset \mathcal{D}(D_{\max}).
\end{equation}
\paragraph{\underline{Second step}} They prove that functions in the vectorial space $E$ have non-tangential traces in $L^2(\Sigma,\mathbb{C}^4)$. More precisely, one can define two linear bounded operators $C_\pm: L^2(\Sigma,\mathbb{C}^4)\to L^2(\Sigma,\mathbb{C}^4)$ defined as
\[
	C_\pm(g)(x) := \lim_{\Omega_\pm \ni y \overset{nt}{\to} x} \Phi(g)(y),
\]
and they are related {\it via} a Plemelj-Sokhotski jump formula to the linear and bounded operator $C_s : L^2(\Sigma,\mathbb{C}^4)\to L^2(\Sigma,\mathbb{C}^4)$ defined for $x\in L^2(\Sigma,\mathbb{C}^4)$ as
\begin{equation}\label{eqn:PSjump}
	C_s(g)(x) = \lim_{\varepsilon\to0}\int_{\Sigma\cap\{|x-y|>\varepsilon\}}\phi(x-y)g(y)ds(y),\quad C_\pm = \mp \frac{i}2(\alpha\cdot n) + C_s,
\end{equation}
where $n$ denotes the outward pointing normal to $\Omega_+$, see \cite[Lemma 3.3]{AMV14}.\\

\paragraph{\underline{Third step}}In order to define the operator one need to give a meaning to the expression $(u\delta_\Sigma)$ for $u = v + \Phi(g)\in E$. By analogy with \eqref{eqn:firstsymshell} for a spherical $\delta$-shell interaction, one can define this expression as the distribution
\begin{equation}\label{eqn:jumpcondgen}
	u\delta_\Sigma := \frac{1}2(u_+|_\Sigma + u_-|_\Sigma)\delta_\Sigma = \big(v|_\Sigma + C_s(g)\big)\delta_\Sigma.
\end{equation}
With this definition and using the jump formula, we compute $D_{\tau,\eta} u$  in the sense of distributions and obtain:
\begin{multline}
	D_{\tau,\eta} u = \bigg(- i(\alpha\cdot\nabla u_+) \mathds{1}_{\Omega_+}  - i(\alpha\cdot\nabla u_-)\mathds{1}_{\Omega_-}\bigg)\\ - i(\alpha\cdot n)(u_-|_\Sigma - u_+|_\Sigma)\delta_\Sigma + \frac12(\tau 1_4 + \eta\beta)(u_+|_\Sigma + u_-|_\Sigma)\delta_\Sigma.
\end{multline}
Thus, a natural jump condition through $\Sigma$ for $u\in E$ is
\begin{equation}\label{eqn:jumpcond}
	\frac12(\tau 1_4 + \eta\beta)(u_+|_\Sigma + u_-|_\Sigma) = i(\alpha\cdot n)(u_-|_\Sigma - u_+|_\Sigma).
\end{equation}
Taking the Plemelj-Sokhotski jump formula \eqref{eqn:PSjump} into account, it rewrites
\[
	(\tau 1_4 + \eta\beta) v|_\Sigma = -\big(1+(\tau+\eta\beta)C_s\big)g.
\]
It leads to the following definition of the operator $D_{\tau,\eta}$, that can be found in \cite[Thm. 3.8.]{AMV14} (for the pure electrostatic case $\eta=0$).
\begin{defi}\label{def:defamv} The operator $D_{\tau,\eta}$ is defined as
\[
	\left\{\begin{array}{rcl}
		\mathcal{D}(D_{\tau,\eta}) & = & \Big\{ v + \Phi(g) : v \in H^1(\mathbb{R}^3,\mathbb{C}^4), g\in L^2(\Sigma,\mathbb{C}^4),\\&&\quad\quad\quad (\tau 1_4 + \eta\beta) v|_\Sigma = -\big(1_4+(\tau1_4+\eta\beta)C_s\big)g\Big\},\\
		D_{\tau,\eta} (v + \Phi(g)) &=& (Dv_+) \oplus (Dv_-),\\
	\end{array}\right.
\]
where we have identified $L^2(\mathbb{R}^3,\CC^4)$ and $L^2(\Omega_+,\CC^4)\oplus L^2(\Omega_-,\CC^4)$.
\end{defi}
Contrary to the strategy developed for spherical shells (see \S \ref{subsec:DES}), Definition \ref{def:defamv} describes the functions in the domain of $\mathcal{D}(D_{\tau,\eta})$ as functions of the ambient Hilbert space $L^2(\mathbb{R}^3,\mathbb{C}^4)$ and precise their Sobolev regularity (actually, this can be made more precise as we will see thereafter in \S \ref{subsec:critstreng}).

The main result concerning self-adjointness reads as follows and concerns the pure electrostatic case ({\it i.e.} $\eta = 0$).

\begin{thm}[{\cite[Thm. 3.8.]{AMV14}}]\label{thm:amv14sa} Let $\Sigma$ be of class $C^2$. As long as $\tau \neq \pm 2$ the operator $D_{\tau,0}$ introduced in Definition \ref{def:defamv} is self-adjoint.
\end{thm}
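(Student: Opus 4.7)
The plan is to apply the standard criterion for self-adjointness: (a) show $D_{\tau,0}$ is symmetric, and (b) show $\ran(D_{\tau,0}-z) = L^2(\mathbb{R}^3,\mathbb{C}^4)$ for some $z \in \mathbb{C} \setminus \mathbb{R}$. Since the domain is parametrised by pairs $(v,g) \in H^1(\mathbb{R}^3,\mathbb{C}^4) \times L^2(\Sigma,\mathbb{C}^4)$ subject to the boundary relation $\tau v|_\Sigma = -(1_4 + \tau C_s) g$, the whole argument is dictated by the mapping properties of the boundary integral operator $1_4 + \tau C_s$ on $L^2(\Sigma,\mathbb{C}^4)$.

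\textbf{The key boundary identity.} The heart of the proof is to show that $1_4 + \tau C_s$ is boundedly invertible on $L^2(\Sigma,\mathbb{C}^4)$ precisely when $\tau \neq \pm 2$. Using the explicit fundamental solution from Proposition \ref{prop:def-fundsol}, the anticommutation relations of the Dirac matrices, and the Plemelj-Sokhotski formulas \eqref{eqn:PSjump}, one derives an algebraic identity of the form
\[
4\, C_s^2 \;=\; 1_4 + K,
\]
where $K$ is a compact operator on $L^2(\Sigma,\mathbb{C}^4)$; the hypothesis $\Sigma \in C^2$ enters here via classical Calder\'on-Zygmund estimates both to obtain the boundedness of $C_s$ in \eqref{eqn:PSjump} and to extract the compact remainder. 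Expanding
\[
(1_4 + \tau C_s)(1_4 - \tau C_s) \;=\; \bigl(1 - \tfrac{\tau^2}{4}\bigr) 1_4 - \tfrac{\tau^2}{4}\, K
\]
shows that $1_4 + \tau C_s$ is Fredholm of index zero whenever $\tau^2 \neq 4$. A supplementary analytic Fredholm argument in the real parameter $\tau$, anchored at the trivial invertibility at $\tau = 0$, then upgrades index zero to genuine invertibility on the whole real line except the two excluded values $\pm 2$.

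\textbf{Symmetry and surjectivity.} For symmetry, I would take $u_j = v_j + \Phi(g_j) \in \mathcal{D}(D_{\tau,0})$ and integrate by parts separately in $\Omega_+$ and $\Omega_-$; the interior terms produce $\langle D u_1, u_2\rangle$, while the boundary contribution (controlled by the non-tangential traces $C_\pm$ and the mapping \eqref{eqn:bdope}) can be expressed through the Plemelj-Sokhotski formulas \eqref{eqn:PSjump}, and the jump condition \eqref{eqn:jumpcond} with $\eta = 0$ makes the result manifestly real, so it cancels in the antisymmetric combination $\langle D_{\tau,0} u_1, u_2\rangle - \langle u_1, D_{\tau,0} u_2\rangle$. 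For surjectivity, fix $z \in \mathbb{C} \setminus \mathbb{R}$ and $f \in L^2(\mathbb{R}^3,\mathbb{C}^4)$, set $v := (D_{\rm free}-z)^{-1} f \in H^1(\mathbb{R}^3,\mathbb{C}^4)$, and look for $u = v + \Phi_z(g)$ in $\mathcal{D}(D_{\tau,0})$, where $\Phi_z$ is the analogue of the single-layer potential $\Phi$ built from the resolvent kernel of $D_{\rm free}-z$. The membership condition then collapses to the single boundary equation $(1_4 + \tau C_s^z)\, g = -\tau\, v|_\Sigma$ on $\Sigma$, which is uniquely solvable by the previous step (the argument transfers from $C_s$ to $C_s^z$ since their difference is compact).

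\textbf{Main obstacle.} The genuinely delicate input is the boundary identity $4 C_s^2 = 1_4 + K$ together with the subsequent Fredholm and invertibility analysis of $1_4 + \tau C_s$; this is exactly where both the $C^2$ regularity of $\Sigma$ and the exclusion $\tau \neq \pm 2$ enter in an essential way, the latter corresponding precisely to the vanishing of the leading coefficient $1 - \tau^2/4$. Everything else --- symmetry, the parametrisation of the domain through $\Phi$, and the reduction of the resolvent problem to a boundary equation --- is a straightforward consequence once this identity is in hand.
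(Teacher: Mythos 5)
Your overall route is the same one the survey attributes to \cite{AMV14} and \cite{BEHL18}: reduce everything to the boundary operator $1_4+\tau C_s$ on $L^2(\Sigma,\CC^4)$, exploit the algebraic identity $4C_s^2=1_4+\text{compact}$ (which indeed comes from the exact relation $C_s(\alpha\cdot n)C_s=-\tfrac14(\alpha\cdot n)$ combined with compactness of the anticommutator $K$ of \eqref{def:anticom}, and this is exactly where the $C^2$ hypothesis enters), and conclude by a Fredholm argument. So the strategy is not a genuinely different one; the problem is that your key step is not correct as written.

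The gap is the claimed invertibility of $1_4+\tau C_s$ for every $\tau\neq\pm2$. Fredholmness of index zero does not give injectivity, and an analytic-Fredholm argument in the parameter $\tau$ only yields invertibility outside some discrete set, which you have no way of identifying with $\{\pm2\}$. In fact the statement is false in general: if $g_0\neq0$ satisfies $(1_4+\tau C_s)g_0=0$, then $u=\Phi(g_0)$ (with $v=0$) belongs to $\mathcal D(D_{\tau,0})$ of Definition \ref{def:defamv} and $D_{\tau,0}u=0$, i.e.\ a nontrivial kernel of $1_4+\tau C_s$ merely signals that $0$ is an eigenvalue in the gap $(-m,m)$, which does occur for suitable non-critical $\tau$ (e.g.\ on the sphere, where the eigenvalue curves sweep the gap as $\tau$ varies). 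For the same reason your surjectivity step fails as stated: replacing $C_s$ by $C_s^z$ via ``their difference is compact'' transfers Fredholmness and the index, but not invertibility. What the actual proofs use is weaker and argued differently: for self-adjointness one only needs the Fredholm/closed-range property of $1_4+\tau C_s$ together with its self-adjointness on $L^2(\Sigma,\CC^4)$ to force $\mathcal D(D_{\tau,0}^*)\subset\mathcal D(D_{\tau,0})$ (this is the kernel/range criterion of \cite[Thm.~2.11]{AMV14}, \cite[Thm.~2.4]{BEHL18} that the survey alludes to), while injectivity of $1_4+\tau C_s^z$ for $z\in\CC\setminus\RR$ is deduced from the symmetry of $D_{\tau,0}$ itself (a kernel element would produce a non-real eigenvalue), not by perturbing from $z=0$. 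With that repair — and checking the range condition for one $z$ in each half-plane rather than a single $z$ — your skeleton matches the cited proof.
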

Remark that we needed to impose two restrictions. The first one is that $\Sigma$ has to be sufficiently smooth but, the most surprising one, is the existence of two critical strengths for the coupling constants $\tau = \pm 2$. This last observation attracted a lot of attention in the past few years as we will see thereafter in \S \ref{subsec:critstreng}.

\begin{rem} An analogue of Theorem \ref{thm:amv14sa} has been obtained in \cite[Section 5.1]{AMV15} (more recently in \cite{BEHL19}) for the general operator $D_{\tau,\eta}$ and reads as follows.

Let $\Sigma$ be of class $C^2$. As long as $\tau^2 - \eta^2 \neq 4$ the operator $D_{\tau,\eta}$ defined in Definition \ref{def:defamv} is self-adjoint.
\end{rem}

Later on, Definition \ref{def:defamv} attracted the attention of specialists in self-adjoint extensions of symmetric operators acquainted with the theory of {\it quasi boundary triples}, a slight modification of the general theory of {\it boundary triples}. (see \cite{BGP07} and references therein for an introduction to boundary triples and \cite{BL07} for an introduction to quasi boundary triples). The main advantage of this theory is that it gives a systematic framework to define the operator, study its self-adjointness and spectral properties.

Following this path, in \cite{BEHL18}, the authors propose a definition of $D_{\tau,0}$  which coincides with the one given in Definition \ref{def:defamv} (see \cite[Definition 4.1.]{BEHL18}) and Theorem \ref{thm:amv14sa} is obtained as a consequence of the general theory of quasi boundary triples (see \cite[Thm. 4.4.]{BEHL18}).

The key argument in these two works lies in a link they establish between properties about the range and the kernel of an integral operator on the shell $\Sigma$ and the question of self-adjointness for $D_{\tau,0}$ (see \cite[Theorem 2.11]{AMV14} and \cite[Theorem 2.4]{BEHL18}).

It turns out that in this study, the anticommutator $\{C_s,i\alpha\cdot n\}$ plays a fundamental role. It is defined as
\begin{equation}\label{def:anticom}
	K := \{C_s,i\alpha\cdot n\} := i \big(C_s(\alpha\cdot n) + (\alpha\cdot n)C_s\big)
\end{equation}
and as long as the shell $\Sigma$ is of class $C^2$, $K$ is a compact operator from $L^2(\Sigma,\mathbb{C}^4)$ onto itself and the problem is solved by an adequate application of Fredholm alternative. Remark that the hypothesis on the smoothness of the shell $\Sigma$ plays a fundamental role here:  there is {\it a priori} no reason for this operator to be compact for less regular shells.

\subsection{How to handle the critical strengths ?}\label{subsec:critstreng}
The critical strengths that appear in Theorem \ref{thm:amv14sa} motivated the  simultaneous works \cite{BH17,OBV18} where the authors wonder until which extent the operator $D_{\tau,0}$ is self adjoint for the critical strengths $\tau =\pm 2$.

First, both works start with a different definition of the domain of the operator $D_{\tau,0}$.

\begin{defi} The Dirac operator with an electrostatic shell interaction of strength $\tau \in \mathbb{R}$ is denoted $D_\tau$ and defined as
\[
	\left\{\begin{array}{rcl}
		\cD(D_\tau) & = & \{ u = u_+ \oplus u_- \in H^1(\Omega_+,\CC^4) \oplus H^1(\Omega_-,\CC^4)) : \\
		&&\quad \quad \quad \quad \frac\tau2 (u_+|_\Sigma + u_-|_\Sigma) = i \alpha\cdot n (u_-|_\Sigma - u_+|_\Sigma)\},\\
		D_\tau u &=& (D u_+) \oplus (D u_-).
	\end{array}\right.
\]
\label{def:opcrit}
\end{defi}
The jump condition is the one obtained in \eqref{eqn:jumpcond} and the main result reads as follows (see \cite[Theorem 4.3]{OBV18} and \cite[Theorem 1.1. \& Theorem 1.2.]{BH17}).

\begin{thm} Let $\tau \in \RR$ and let $D_{\tau}$ be the operator of Definition \ref{def:opcrit}. The following alternative holds.
\begin{enumerate}[label=(\roman*)]
\item\label{itm:case1} If $\tau \neq \pm 2$, $D_\tau$ is self-adjoint and coincides with the operator $D_{\tau,0}$ of Definition \ref{def:defamv}.
\item\label{itm:case2} If $\tau = \pm 2$, $D_\tau$ is essentially self-adjoint and there holds
\[
	\begin{array}{rcl}
	\cD(D_\tau)\subsetneq \cD(\overline{D_\tau}) &:=& \{ u = u_+ \oplus u_- \in L^2(\RR^3,\CC^4)  : (\alpha\cdot\nabla)u_\pm \in L^2(\Omega_\pm,\CC^4),\\
		&&\quad \quad \quad \quad \frac\tau2 (u_+|_\Sigma + u_-|_\Sigma) = i \alpha\cdot n (u_-|_\Sigma - u_+|_\Sigma)\},
	\end{array}
\]
where the transmission condition holds in $H^{-\frac12}(\Sigma,\CC^4)$.
\end{enumerate}
\label{thm:sota}
\end{thm}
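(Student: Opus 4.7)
The plan is to compare the operator $D_\tau$ of Definition~\ref{def:opcrit} with the operator $D_{\tau,0}$ of Definition~\ref{def:defamv}, whose self-adjointness for $\tau\neq\pm 2$ is already settled by Theorem~\ref{thm:amv14sa}. Observe first that $D_\tau$ is symmetric: the boundary terms produced by integration by parts on $\Omega_\pm$ cancel thanks to the transmission relation together with the identity $(\alpha\cdot n)^2=1_4$. The strategy then splits into three tasks: (a) exhibit the inclusion $D_\tau\subseteq D_{\tau,0}$ for every $\tau\in\mathbb{R}$; (b) upgrade it to equality when $\tau\neq\pm 2$; (c) identify $D_\tau^*$ and describe its closure when $\tau=\pm 2$.

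For task (a), given $u\in\mathcal{D}(D_\tau)$ I would set
\[
g := i(\alpha\cdot n)\bigl(u_+|_\Sigma-u_-|_\Sigma\bigr)\in H^{1/2}(\Sigma,\mathbb{C}^4),\qquad v := u-\Phi(g).
\]
The Plemelj-Sokhotski formula \eqref{eqn:PSjump} is tuned so that $\Phi(g)_+|_\Sigma-\Phi(g)_-|_\Sigma=-i(\alpha\cdot n)g=u_+|_\Sigma-u_-|_\Sigma$, hence $v_+|_\Sigma=v_-|_\Sigma$; since $g\in H^{1/2}$ yields $\Phi(g)\in H^1(\Omega_\pm)$, the matching traces then give $v\in H^1(\mathbb{R}^3,\mathbb{C}^4)$. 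Rewriting the jump relation of Definition~\ref{def:opcrit} in terms of $v|_\Sigma$ and $C_sg$ via $C_\pm=\mp\tfrac{i}{2}\alpha\cdot n+C_s$ produces exactly the compatibility $\tau v|_\Sigma=-(1_4+\tau C_s)g$ appearing in Definition~\ref{def:defamv}.

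For task (b), given $u=v+\Phi(g)\in\mathcal{D}(D_{\tau,0})$ it suffices to show that $g\in H^{1/2}(\Sigma,\mathbb{C}^4)$: this promotes $\Phi(g)$ to $H^1(\Omega_\pm)$ via elliptic regularity applied to the homogeneous Dirac problem $D\Phi(g)=0$ with $H^{1/2}$ boundary data. The compatibility relation reads $(1_4+\tau C_s)g=-\tau v|_\Sigma\in H^{1/2}$, so the task reduces to inverting $1_4+\tau C_s$ on $H^{1/2}(\Sigma,\mathbb{C}^4)$ when $\tau\neq\pm 2$. I would exploit the compactness of the anticommutator $K$ of \eqref{def:anticom} on $H^{1/2}$ (using $\Sigma\in C^2$) together with an algebraic identity of the form $(1_4+\tau C_s)(1_4-\tau C_s)=\bigl(1-\tfrac{\tau^2}{4}\bigr)1_4+\text{compact}$, which renders $1_4+\tau C_s$ Fredholm of index $0$ on $H^{1/2}$ as soon as $\tau\neq\pm 2$; triviality of its kernel in $H^{1/2}$ is inherited from the known self-adjointness of $D_{\tau,0}$ (any $g$ in the kernel yields a nontrivial element of $\ker D_{\tau,0}$ at the spectral parameter $0$, ruled out by symmetry), whence invertibility.

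For task (c) at $\tau=\pm 2$, the same compatibility admits nontrivial $L^2$-solutions $g\notin H^{1/2}$, which produce elements $u\in\mathcal{D}(D_{\tau,0})$ with $u_\pm\notin H^1(\Omega_\pm)$; these witness the strict inclusion $\mathcal{D}(D_\tau)\subsetneq\mathcal{D}(\overline{D_\tau})$. To establish essential self-adjointness I would compute $D_\tau^*$ by integrating by parts against test functions and invoking the generalized trace theorem for functions satisfying $(\alpha\cdot\nabla)u_\pm\in L^2(\Omega_\pm)$, so as to show that $\mathcal{D}(D_\tau^*)$ consists precisely of such $u$ whose transmission relation holds in the $H^{-1/2}(\Sigma,\mathbb{C}^4)$ duality pairing. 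Density of $\mathcal{D}(D_\tau)$ in $\mathcal{D}(D_\tau^*)$ in the graph norm is then obtained by approximating an arbitrary $g\in L^2$ by $g_n\in H^{1/2}$ compatible with a regularized $v_n\in H^1$, the graph-norm error being controlled through compactness of $K$ and the structure of $\ker(1_4\pm 2C_s)$. The main obstacle is precisely this case (c): setting up the generalized trace theory for the maximal Dirac operator, and running the density argument at the critical coupling where $1_4\pm 2C_s$ has lost invertibility; the compactness of $K$ is what keeps the failure of invertibility mild enough for essential self-adjointness to survive.
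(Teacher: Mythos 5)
Your route for case \ref{itm:case1} --- reconciling Definition \ref{def:opcrit} with Definition \ref{def:defamv} and importing Theorem \ref{thm:amv14sa} --- is a legitimate alternative to the paper, which instead works directly inside $\cD(D_{\rm max})$ via the projectors $\mathcal{C}_\pm$, Lemma \ref{lem:dirsum} and the system \eqref{eqn:sysinv}. But your step (b) contains a genuine error: you reduce it to invertibility of $1_4+\tau C_s$ on $H^{1/2}(\Sigma,\CC^4)$ and claim its kernel is trivial because a nonzero $g\in\ker(1_4+\tau C_s)$ would give a nontrivial element of $\ker D_{\tau,0}$, ``ruled out by symmetry''. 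Self-adjointness does not exclude $0$ from the point spectrum: $0$ sits in the gap $(-m,m)$, where $D_{\tau,0}$ does have discrete eigenvalues for suitable non-critical $\tau$ (for the sphere they sweep through the gap as $\tau$ varies), and $g\in\ker(1_4+\tau C_s)$ corresponds exactly to $\Phi(g)$ being a zero-energy eigenfunction. So $1_4+\tau C_s$ need not be invertible for $\tau\neq\pm2$, and your argument as written fails. The fix is that you do not need invertibility, only a regularity bootstrap: from $\big(C_s(\alpha\cdot n)\big)^2=-\frac14 1_4$ and \eqref{def:anticom} one gets $C_s^2=\frac14 1_4-iC_s(\alpha\cdot n)K$, so applying $1_4-\tau C_s$ to $(1_4+\tau C_s)g=-\tau v|_\Sigma\in H^{1/2}$ gives $\big(1-\frac{\tau^2}{4}\big)g\in H^{1/2}(\Sigma,\CC^4)$ because $K$ is smoothing from $H^{-1/2}$ to $H^{1/2}$ for $C^2$ shells; this is precisely the mechanism behind the invertibility of $A_\tau$ in \eqref{eqn:sysinv}, and it yields $g\in H^{1/2}$ for $\tau\neq\pm2$ regardless of whether $1_4+\tau C_s$ has a kernel.

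The second, and larger, gap is case \ref{itm:case2}, which is the actual content of the theorem at $\tau=\pm2$ and which your sketch leaves unproved. Identifying $\cD(D_\tau^*)$ as the maximal-type domain with transmission condition in $H^{-\frac12}(\Sigma,\CC^4)$ is the easy half; essential self-adjointness requires $\overline{D_\tau}=D_\tau^*$, i.e.\ graph-norm density of $\cD(D_\tau)$ in $\cD(D_\tau^*)$. Your approximation scheme only handles boundary data $g\in L^2(\Sigma,\CC^4)$, whereas $\cD(D_\tau^*)$ contains functions whose traces are merely $H^{-\frac12}$ --- this is exactly why the paper extends $\Phi$ and $C_\pm$ to $H^{-\frac12}$, introduces $\mathcal{C}_\pm$ and proves Lemma \ref{lem:dirsum} to parametrize $\cD(D_{\rm max})$ before proving $\overline{D_\tau}=D_\tau^*$. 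Asserting that the graph-norm error is ``controlled through compactness of $K$ and the structure of $\ker(1_4\pm2C_s)$'' is a placeholder, not an argument. Likewise, the strict inclusion $\cD(D_\tau)\subsetneq\cD(\overline{D_\tau})$ needs a construction: in the paper, non-invertibility of $A_\tau$ at $\tau=\pm2$ lets any $f\in H^{-\frac12}(\Sigma,\CC^4)\setminus H^{\frac12}(\Sigma,\CC^4)$ generate an element of $\cD(\overline{D_\tau})\setminus\cD(D_\tau)$, while your statement that the compatibility relation ``admits nontrivial $L^2$-solutions $g\notin H^{1/2}$'' is itself an unproved claim (it follows from the vanishing of the factor $1-\frac{\tau^2}{4}$ in the bootstrap above, but that has to be said and exploited). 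Until these two points are supplied, the proposal establishes at most case \ref{itm:case1}, and even there only after repairing the kernel argument.
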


Because the difference of the resolvents of $D_\tau$ and $D_{\rm free}$ is a compact operator for the non-critical cases \ref{itm:case1}, an elemental spectral consequence of Theorem \ref{thm:sota} is that the essential spectrum of $D_{\tau}$ is given by
\[
	Sp_{\rm ess}(D_{\tau}) = Sp_{\rm ess}(D_{\rm free}) = \big(-\infty,-|m|\big] \cup \big[|m|,+\infty\big).
\]
Remark that this is not necessarily true for the critical cases, which may also prevent the functions in the domain $\cD(D_\tau)$ to have any Sobolev regularity. Namely, in \cite[Thm. 5.9.]{BH17} the authors prove the following theorem.
\begin{thm} Let $\tau = \pm 2$. If an open subset of $\Sigma$ is contained in a plane, there holds:
\[
	0 \in Sp_{ess}(D_\tau).
\]
In particular, for all $s>0$, $\cD(D_\tau)$ can not be included in the Sobolev space $H^s(\Omega_+,\CC^4)\oplus H^s(\Omega_-,\CC^4)$.
\label{thm:Sobregu}
\end{thm}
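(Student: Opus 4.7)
The plan is to construct a singular Weyl sequence for $\overline{D_\tau}$ at energy $0$, entirely supported in a thin tubular neighborhood of the planar piece of $\Sigma$. After translation and rotation I may assume that $\Sigma$ contains an open disk of the plane $P:=\{x_3=0\}$ centered at the origin, with unit normal $n=e_3$. Any function compactly supported in a sufficiently thin tube about this disk vanishes near $\Sigma\setminus\Sigma_0$ and therefore trivially satisfies the transmission condition there; on the planar part, the condition will be inherited from the explicit profiles constructed below.

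For the planar model, a tangential Fourier transform in $(x_1,x_2)$ reduces $Du=0$ to the family of ODEs $\partial_{x_3}f=A(\xi)f$ with $A(\xi):=-i\alpha_3 M(\xi)$ and $M(\xi):=\xi_1\alpha_1+\xi_2\alpha_2+m\beta$. The Clifford relations give $A(\xi)^2=(|\xi|^2+m^2)\,1_4$, so $A(\xi)$ has spectrum $\{\pm\lambda\}$ with $\lambda:=\sqrt{|\xi|^2+m^2}$, each eigenspace of dimension $2$. Because $\alpha_3$ anticommutes with $\alpha_1,\alpha_2,\beta$, one checks $\{A(\xi),\alpha_3\}=0$, so multiplication by $i\epsilon\alpha_3$ (with $\epsilon:=\tau/2\in\{\pm1\}$) bijects $\ker(A(\xi)+\lambda)$ onto $\ker(A(\xi)-\lambda)$. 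A short manipulation based on the identity $(\epsilon+i\alpha\cdot n)(\epsilon-i\alpha\cdot n)=2\cdot 1_4$ recasts the critical jump condition as $w_+=i\epsilon(\alpha\cdot n)w_-$, where $w_\pm$ are the traces from $\Omega_\pm$. Hence, for every $w\in\ker(A(\xi)+\lambda)$, the profile
\[
V_\xi(x_3):=\begin{cases} e^{\lambda x_3}\, i\epsilon\alpha_3 w, & x_3<0,\\ e^{-\lambda x_3}\, w, & x_3>0, \end{cases}
\]
provides an exponentially localized solution of $D(V_\xi\,e^{i\xi\cdot(x_1,x_2)})=0$ that satisfies the critical jump at $x_3=0$, exhibiting an infinite-dimensional kernel for the planar model.

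To transfer this to a Weyl sequence on the true geometry, I form wave packets
\[
u_k(x):=\chi_t(x_1,x_2)\,\psi(x_3)\int_{\RR^2}h(\xi-\xi_k)\,V_\xi(x_3)\,e^{i\xi\cdot(x_1,x_2)}\,d\xi,
\]
with $|\xi_k|\to\infty$, $h\in C_c^\infty(\RR^2)$ fixed, $\chi_t\in C_c^\infty(\RR^2)$ supported in a disk of $\Sigma_0$, and $\psi\in C_c^\infty((-\delta,\delta))$ with $\psi\equiv1$ near $0$. Since the cutoff factorizes, $u_k$ inherits the critical jump on $\Sigma_0$ and vanishes near $\Sigma\setminus\Sigma_0$, so $u_k\in\cD(\overline{D_\tau})$. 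The commutator identity $Du_k=-i\bigl(\alpha\cdot\nabla(\chi_t\psi)\bigr)\,w_k$, where $w_k$ denotes the inner integral, splits the defect into a normal-cutoff term that is $O(e^{-\delta|\xi_k|})$ by the exponential decay of $V_\xi$ and a tangential-cutoff term controlled by shrinking the gradient region of $\chi_t$ relative to a large region where $\chi_t\equiv1$. Balancing these scales and normalizing to $\|u_k\|=1$ yields $\|Du_k\|\to0$, while the high-frequency oscillation $e^{i\xi_k\cdot(x_1,x_2)}$ forces $u_k\rightharpoonup0$ by the Riemann--Lebesgue lemma, delivering the Weyl sequence and hence $0\in Sp_{ess}(\overline{D_\tau})=Sp_{ess}(D_\tau)$.

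The Sobolev non-embedding then follows by contradiction: if $\cD(\overline{D_\tau})\subset H^s(\Omega_+,\CC^4)\oplus H^s(\Omega_-,\CC^4)$ for some $s>0$, the closed graph theorem and a local Rellich--Kondrachov argument would force $(\overline{D_\tau}-i)^{-1}-(D_{\mathrm{free}}-i)^{-1}$ to be compact, so $Sp_{ess}(\overline{D_\tau})=Sp_{ess}(D_{\mathrm{free}})=(-\infty,-|m|]\cup[|m|,\infty)$, excluding $0$ when $m\neq0$ and contradicting the first part. The algebraic source of the critical degeneracy is clean---it is forced by $\{A(\xi),\alpha_3\}=0$---so the main difficulty I anticipate is not this computation but the careful choice of cutoff scales and $|\xi_k|$ that make the truncated wave packet simultaneously normalized, almost in the kernel of $D_\tau$, and weakly null.
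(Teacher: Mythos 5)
The survey itself does not reproduce a proof of this statement (it is quoted from Behrndt--Holzmann \cite[Thm. 5.9]{BH17}), so I judge your argument on its own merits. The algebraic core is correct and is indeed the known mechanism: reducing the planar model by tangential Fourier transform, the anticommutation $\{A(\xi),\alpha_3\}=0$ together with the identity $(\epsilon+i\alpha_3)(\epsilon-i\alpha_3)=2$ recasts the jump condition as $u_+|_\Sigma=i\epsilon(\alpha\cdot n)\,u_-|_\Sigma$, and one checks that only when $\tau^2=4$ does this relation map $\ker(A(\xi)+\lambda)$ into $\ker(A(\xi)-\lambda)$, which is exactly what permits a zero mode decaying on \emph{both} sides of the plane for every tangential frequency $\xi$; your profile $V_\xi$ is a valid such solution, and the cut-off wave packets do lie in $\cD(D_\tau)$ and converge weakly to $0$.

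The genuine gap is the claim that $\|D_\tau u_k\|\to0$ after normalization. With a \emph{fixed} bump $h$ and \emph{fixed} cutoffs $\chi_t,\psi$, and only $|\xi_k|\to\infty$, the defect does not vanish: the normal term carrying $\psi'$ is indeed $O(e^{-c|\xi_k|})$, but the tangential term $-i(\alpha\cdot\nabla\chi_t)\psi W_k$ and the norm $\|u_k\|$ both scale like $\lambda_k^{-1/2}$ times, respectively, $\|(\nabla\chi_t)E\|_{L^2}$ and $\|\chi_t E\|_{L^2}$, where $E$ is the $k$-independent spatial envelope (essentially the inverse Fourier transform of $h$, which cannot be compactly supported); hence the ratio $\|D_\tau u_k\|/\|u_k\|$ stabilizes at a fixed positive constant. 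Your proposed remedy---shrinking the gradient region of $\chi_t$ relative to a large plateau---cannot work here: the plateau is confined to the fixed bounded planar patch of $\Sigma$ and cannot be enlarged, while thinning the transition layer only inflates $\|\nabla\chi_t\|_\infty$. The construction is salvageable, but with a different balancing: take $h_k(\xi)=h\big((\xi-\xi_k)/\rho_k\big)$ with $\rho_k\to\infty$ and $\rho_k/|\xi_k|\to0$, so that the tangential envelope concentrates on scale $\rho_k^{-1}$ well inside the plateau and the cutoff error tends to $0$ (checking that $\lambda(\xi)$ and the eigenvector $w(\xi)$ vary negligibly over the frequency support); alternatively, fix $\varepsilon>0$, build infinitely many almost-orthogonal $\varepsilon$-quasimodes with distant frequencies and use that $\dim\ran\mathds{1}_{(-\varepsilon,\varepsilon)}(\overline{D_\tau})=\infty$ for every $\varepsilon$ characterizes $0\in Sp_{ess}$. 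Finally, for the Sobolev consequence note that your contradiction needs $m\neq0$, and the step ``$\cD(\overline{D_\tau})\subset H^s\oplus H^s$ implies compactness of the resolvent difference'' is plausible but not automatic: besides the closed graph theorem you must exploit that $(R-R_0)f$ is a $(D-i)$-harmonic function off $\Sigma$ decaying away from $\Sigma$, so that Rellich--Kondrachov applies on a bounded neighborhood of $\Sigma$; this should be spelled out.
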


We briefly outline the strategy used to prove Theorem \ref{thm:sota} in \cite{OBV18}.\\

\underline{First step}. In order to prove Theorem \ref{thm:sota}, one needs to understand what is missing in the space $E$ in order to have an equality instead of an inclusion in \eqref{eqn:maxdomain}.


To do so, remark that a duality argument implies that functions in $\cD(D_{\rm max})$ have weak traces in $H^{-\frac12}(\Sigma,\CC^4)$.

Then, one proves that the operator $\Phi$ introduced in \eqref{eqn:bdope} extends into a linear bounded operator from $H^{-\frac12}(\Sigma,\CC^4)$ to $\cD(D_{\rm max})$  (see \cite[Theorem 2.2.]{OBV18}).

Finally, remark that the operators $C_\pm$ of \eqref{eqn:PSjump} also extends as bounded operators from $H^{-\frac12}(\Sigma,\CC^4)$ onto itself.

The Plemelj-Sokhotski jump formula \eqref{eqn:PSjump} leads to introduce the bounded projectors in $H^{-\frac12}(\Sigma,\CC^4)$ defined as
\[
	\mathcal{C}_\pm := \pm C_\pm (i\alpha\cdot n).
\]
They satisfy $\mathcal{C}_+ + \mathcal{C}_- = Id$ and $\mathcal{C}_\pm^2 = \mathcal{C}_\pm$. They allow to describe accurately the maximal domain $\mathcal{D}(D_{\rm max})$. For this purpose, we are lead to introduce the spaces
\[
	\mathcal{H}_{0}^1(\Omega_\pm) := \{u \in H^1(\Omega_\pm,\CC^4): \mathcal{C}_\pm(u|_\Sigma) = 0\}.
\]
We have the following lemma.
\begin{lem}\label{lem:dirsum} The following direct sum of vector spaces holds.
\[
	H_\alpha(\Omega_\pm) = \mathcal{H}_{0}^1(\Omega_\pm)  \overset{\cdot}{+} \{ \Phi\big((\alpha\cdot n) f\big) : f\in \ran\mathcal{C}_\pm\},
\]
where $H_\alpha(\Omega_\pm) := \{u \in L^2(\Omega_\pm,\CC^4) : (\alpha\cdot\nabla) u \in L^2(\Omega_\pm,\CC^4)\}$.
\end{lem}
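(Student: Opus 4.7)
The plan is to establish both directness and existence of the announced decomposition, exploiting the Plemelj--Sokhotski jump formula \eqref{eqn:PSjump} together with the idempotency of the projectors $\mathcal{C}_\pm$. The directness is essentially algebraic; the existence reduces, after a natural choice of $f$, to an elliptic regularity statement for the Dirac operator, which I expect to be the main hurdle.

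The algebraic pivot is the identity
\[
\mathcal{C}_\pm\bigl(C_\pm((\alpha\cdot n) g)\bigr) = C_\pm((\alpha\cdot n) g), \qquad g \in H^{-\frac12}(\Sigma,\CC^4),
\]
which follows at once from $\mathcal{C}_\pm = \pm i\, C_\pm(\alpha\cdot n)$ and $\mathcal{C}_\pm^2 = \mathcal{C}_\pm$. With this in hand, directness is immediate: if $w + \Phi((\alpha\cdot n) f)\big|_{\Omega_\pm} = 0$ with $w\in\mathcal{H}^1_0(\Omega_\pm)$ and $f\in\ran\mathcal{C}_\pm$, taking the $\pm$-trace yields $w|_\Sigma + C_\pm((\alpha\cdot n) f) = 0$; applying $\mathcal{C}_\pm$ and using $\mathcal{C}_\pm(w|_\Sigma) = 0$ together with the identity above gives $C_\pm((\alpha\cdot n) f) = 0$, whence $f = \mathcal{C}_\pm f = \pm i\, C_\pm((\alpha\cdot n) f) = 0$ and $w = 0$.

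For existence, given $u\in H_\alpha(\Omega_\pm)$, I would use the weak trace $u|_\Sigma^\pm \in H^{-\frac12}(\Sigma,\CC^4)$ (available by duality, as recalled in the outline preceding the lemma) and set
\[
f := \pm i\, \mathcal{C}_\pm(u|_\Sigma^\pm) \in \ran\mathcal{C}_\pm.
\]
A short computation using $\mathcal{C}_\pm = \pm i\, C_\pm(\alpha\cdot n)$ together with $\mathcal{C}_\pm^2=\mathcal{C}_\pm$ gives $C_\pm((\alpha\cdot n) f) = \mathcal{C}_\pm(u|_\Sigma^\pm)$. Hence $w := u - \Phi((\alpha\cdot n) f)\big|_{\Omega_\pm}$ lies in $H_\alpha(\Omega_\pm)$, and has trace $w|_\Sigma^\pm = (\mathrm{Id} - \mathcal{C}_\pm)(u|_\Sigma^\pm) = \mathcal{C}_\mp(u|_\Sigma^\pm)$, so $\mathcal{C}_\pm(w|_\Sigma^\pm) = 0$ by $\mathcal{C}_\pm \mathcal{C}_\mp = 0$.

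The main obstacle is then to upgrade $w$ from $H_\alpha(\Omega_\pm)$ to $H^1(\Omega_\pm,\CC^4)$, which would place $w$ in $\mathcal{H}^1_0(\Omega_\pm)$ and conclude the proof. This is the substantive elliptic regularity content of the lemma: any $v\in H_\alpha(\Omega_\pm)$ with $\mathcal{C}_\pm(v|_\Sigma^\pm) = 0$ is automatically $H^1$. My strategy would be to construct an $H^1(\Omega_\pm,\CC^4)$-lift $\tilde{w}$ of the boundary datum $w|_\Sigma^\pm \in \ran\mathcal{C}_\mp$ via a continuous right inverse of the trace map adapted to the Calder\'on-type projector $\mathcal{C}_\mp$, so that $w - \tilde{w}$ has vanishing weak trace; extending $w - \tilde{w}$ by zero to $\RR^3$ would then produce an element of $\mathcal{D}(D_{\mathrm{free}}) = H^1(\RR^3,\CC^4)$, forcing $w - \tilde{w} \in H^1(\Omega_\pm,\CC^4)$ and hence $w \in H^1(\Omega_\pm,\CC^4)$.
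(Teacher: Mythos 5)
Your algebraic skeleton is exactly the paper's: the same splitting $u_\pm = v_\pm + \Phi\big((\alpha\cdot n)f\big)$ with $f=\pm i\,\mathcal{C}_\pm(u_\pm|_\Sigma)$, the same computation of the trace $v_\pm|_\Sigma=\mathcal{C}_\mp(u_\pm|_\Sigma)$, and essentially the same directness argument via $\mathcal{C}_\pm^2=\mathcal{C}_\pm$. That part is correct. The issue is the step you yourself flag as the main hurdle, where your proposed route does not go through as stated.

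You want to lift the boundary datum $w|_\Sigma\in\ran\mathcal{C}_\mp$ to some $\tilde w\in H^1(\Omega_\pm,\CC^4)$ by a ``continuous right inverse of the trace map adapted to $\mathcal{C}_\mp$''. But a priori $w|_\Sigma$ is only a weak trace in $H^{-\frac12}(\Sigma,\CC^4)$, and membership in $\ran\mathcal{C}_\mp$ (as a subspace of $H^{-\frac12}$) gives no extra Sobolev regularity: the range of $\mathcal{C}_\mp$ in $H^{-\frac12}$ contains plenty of elements outside $H^{\frac12}$. Since the trace map of $H^1(\Omega_\pm)$ lands in $H^{\frac12}(\Sigma)$, no $H^1$-lift of a general $H^{-\frac12}$ datum exists, so the lifting step presupposes precisely the regularity one has to prove. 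The missing ingredient is the smoothing property of the ``wrong-side'' projector: for $u_\pm\in H_\alpha(\Omega_\pm)$ one has $\mathcal{C}_\mp(u_\pm|_\Sigma)\in H^{\frac12}(\Sigma,\CC^4)$ (this is \cite[Proposition 2.7]{OBV18}, and it is where the $C^2$ smoothness of $\Sigma$ enters, through the smoothing anticommutator $K$ of \eqref{def:anticom}); the paper then concludes with the elliptic regularity statement that $v\in H_\alpha(\Omega_\pm)$ with trace in $H^{\frac12}$ is in $H^1(\Omega_\pm,\CC^4)$ (\cite[Proposition 2.16]{OBV18}). Your extension-by-zero argument (zero weak trace plus $H_\alpha$ implies the extension lies in $\cD(D_{\rm free})=H^1(\RR^3,\CC^4)$) is a reasonable way to exploit a lift once the datum is known to be in $H^{\frac12}$, but without that regularity gain the proof is incomplete.
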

\begin{proof} It is clear that the set in the right-hand side is included in $H_\alpha(\Omega_\pm)$. Now, pick $u_\pm  \in H_\alpha(\Omega_\pm)$. We have
\[
	u_\pm = \underset{:= v_\pm}{\underbrace{u_\pm \mp i\Phi\big((\alpha\cdot n)\mathcal{C}_\pm (u_\pm|_\Sigma)\big)}}\pm i\Phi\big((\alpha\cdot n)\mathcal{C}_\pm (u_\pm|_\Sigma)\big).
\]
Remark that $v_\pm \in H_\alpha(\Omega_\pm)$ and  $v_\pm|_\Sigma = u_\pm|_\Sigma - \mathcal{C}_\pm(u_\pm) = \mathcal{C}_\mp(u_\pm|_\Sigma) \in H^{\frac12}(\Sigma,\CC^4)$ by \cite[Proposition 2.7.]{OBV18}. Thus, by elliptic regularity (see \cite[Proposition 2.16.]{OBV18}), $v_\pm \in H^1(\Omega_\pm,\CC^4)$ and as $v_\pm|_\Sigma = \mathcal{C_\mp}(u_\pm|_\Sigma)$ we get $C_\pm(v_\pm|_\Sigma) = 0$ and $v_\pm \in \mathcal{H}_0^1(\Omega_\pm)$. Setting $f = \pm i \mathcal{C}_\pm(u|_\Sigma)$ we obtain that
\[
H_\alpha(\Omega_\pm) =\mathcal{H}_{0}^1(\Omega_\pm)  {+} \{ \Phi\big((\alpha\cdot n) f\big) : f\in \ran\mathcal{C}_\pm\}
\]
It remains to prove that the sum is direct. Assume that
\[
u \in \mathcal{H}_{0}^1(\Omega_\pm) \cap \{ \Phi\big((\alpha\cdot n) f\big) : f\in \ran\mathcal{C}_\pm\}.
\]
Hence, $ u = \mp i \Phi\big((\alpha\cdot n)f\big)$ for some $f\in\ran\mathcal{C}_\pm$. As $u\in \mathcal{H}_{0}^1(\Omega_\pm)$, we obtain $0 = \mathcal{C}_\pm (u|_\Sigma) = f$ and $u = 0$.
\end{proof}

\begin{rem} The spaces $\{\Phi\big((\alpha\cdot n)f\big) : f\in \ran\mathcal{C}_\pm\}$ can be seen as analogues of Bergman spaces for the Dirac operator, similarly as the usual Bergman space defined as the space of square integrable holomorphic functions in a domain of $\mathbb{R}^2$. Moreover, remark that the space of traces of Dirac-harmonic functions in $\Omega_\pm$ is $\ran \mathcal{C}_\pm$. This space can be understood as the natural counterpart of Hardy spaces on the boundary $\Sigma$ for Dirac operators.
\end{rem}

As the maximal domain $\cD(D_{\rm max})$ satisfies
\[
	\cD(D_{\rm max}) = H_\alpha(\Omega_+) \oplus H_\alpha(\Omega_-),
\]
Lemma \ref{lem:dirsum} provides an accurate description of $\cD(D_{\rm max})$.\\

\underline{Second step}. Now, consider $D_\tau^*$, the adjoint of the operator $D_\tau$ introduced in Definition \ref{def:opcrit}. One can prove that
\[
	\cD(D_\tau^*) = \Big\{ u = u_+ \oplus u_- \in \cD(D_{\rm max}),
		\frac\tau2 (u_+|_\Sigma + u_-|_\Sigma) = i \alpha\cdot n (u_-|_\Sigma - u_+|_\Sigma)\Big\},
\]
where the transmission condition holds in $H^{-\frac12}(\Sigma,\CC^4)$. If the traces $u_\pm|_\Sigma \in H^{\frac12}(\Sigma,\CC^4)$, by elliptic regularity (see \cite[Proposition 2.16.]{OBV18}), the non-critical case \ref{itm:case1} Theorem \ref{thm:sota} is proved.

By Lemma \ref{lem:dirsum}, for $u \in \cD(D_{\rm max})$, we always have $\mathcal{C}_\pm(u_\mp|_\Sigma|) \in H^{\frac12}(\Sigma,\CC^4)$ and it remains to prove that $\mathcal{C}_\pm(u_\pm|_\Sigma|) \in H^{\frac12}(\Sigma,\CC^4)$. Using commutation relations between $\mathcal{C}_\pm$ and the multiplication operator $(\alpha\cdot n)$, one obtains the following system in $H^{-\frac12}(\Sigma,\CC^8)$ (see \cite[(4.8)]{OBV18}):
\begin{equation}\label{eqn:sysinv}
	A_\tau \begin{pmatrix}\mathcal{C}_+(u_-|_\Sigma)\\\mathcal{C}_-(u_+|_\Sigma)\end{pmatrix} = B_\tau \begin{pmatrix}\mathcal{C}_+(u_+|_\Sigma)\\\mathcal{C}_-(u_-|_\Sigma) \end{pmatrix} + F\begin{pmatrix} K(u_+|_\Sigma - u_-|_\Sigma) \\ K(u_+|_\Sigma - u_-|_\Sigma)\end{pmatrix},
\end{equation}
where $A_\tau,B_\tau, F \in C^1(\Sigma,\CC^{8\times8})$ and $K$ is the anticommutator introduced in \eqref{def:anticom}.

If $\Sigma$ is of class $C^2$ the right-hand side of \eqref{eqn:sysinv} belongs to $H^{\frac12}(\Sigma,\CC^8)$ because $K$ is not only compact as a bounded operator in  $L^2(\Sigma,\CC^4)$ but also a smoothing operator from $H^{-\frac12}(\Sigma,\CC^4)$ to $H^{\frac12}(\Sigma,\CC^4)$ (see \cite[Proposition 2.8.]{OBV18}). As $A_\tau$ is invertible if and only if $\tau \neq \pm 2$, \ref{itm:case1} Theorem \ref{thm:sota} is proved.

To prove \ref{itm:case2} Theorem \ref{thm:sota}, one first proves that $\overline{D_\tau}$, the closure of $D_\tau$, is $D_\tau^*$ and the only thing left to check is that $\cD(D_\tau)$ differs from $\cD(\overline{D_\tau})$. Actually, any $f \in H^{-\frac12}(\Sigma,\CC^4)$ such that $f\notin H^{\frac12}(\Sigma,\CC^4)$ generates a function in $\cD(\overline{D_\tau})$ which does not belong to $\cD(D_\tau)$ using that in this case, the matrix valued operator $A_\tau$ in \eqref{eqn:sysinv} is not invertible (see {\it e.g.} \cite[\S 4.4.]{OBV18}).
\subsection{State of the art on self-adjointness, consequences and open problems}\label{subsec:stofart}
In this paragraph, we give the most recent definition of the Dirac operator $D_{\eta,\tau}$ as given in  \cite{BEHL19}. This covers the previous definitions and results of \S \ref{subsec:genshell} and \S \ref{subsec:critstreng}.

\begin{defi}[{\cite[Equation (3.1)]{BEHL19}}]\label{def:final} Let $\tau,\eta \in \mathbb{R}$. The Dirac operator with electrostatic interaction of strength $\tau$ and Lorentz scalar interaction of strength $\eta$ denoted $D_{\eta,\tau}$ is defined as
\[\left\{
\begin{array}{rcl}
	\cD(D_{\tau,\eta}) &:=& \{ u = u_+ \oplus u_- \in H^{1}(\Omega_+,\CC^4)\oplus H^{1}(\Omega_-,\CC^4) : \\ &&\quad\quad i (\alpha\cdot n) (u_-|_\Sigma - u_+|_\Sigma) = \frac12\big(\tau 1_{4} +\eta \beta \big)(u_+|_\Sigma + u_-|_\Sigma)
	\},\\
	D_{\tau,\eta} u &=& (D u_+)\oplus(D u_-).
\end{array}\right.
\]
\end{defi}

Combining \cite[Theorem 3.4]{BEHL19}, \cite[Theorem 4.3.]{OBV18} and \cite[Theorems 1.1 \& 1.2]{BH17} we obtain the following result.
\begin{thm} If $\tau^2 - \eta^2 \neq 4$ the operator $D_{\tau,\eta}$ introduced in Definition \ref{def:final} is self-adjoint.

In the pure electrostatic case $\tau = \pm 2$ and $\eta=0$, $D_{\pm2,0}$ is essentially self adjoint and the domain of its closure $\overline{}$is given by
\begin{align*}
	\cD(D_{\pm 2,0})\subsetneq\cD(\overline{D_{\pm 2,0}}) &:=  \{u = u_+ \oplus u_- \in L^2(\mathbb{R}^3,\CC^4) : (\alpha\cdot\nabla)u_\pm \in L^2(\Omega_\pm,\CC^4)\\ &\quad\quad\quad i (\alpha\cdot n) (u_-|_\Sigma - u_+|_\Sigma) = \frac12\big(\tau 1_{4} +\eta \beta \big)(u_+|_\Sigma + u_-|_\Sigma)
	\},
\end{align*}
where the transmission condition makes sense in $H^{-\frac12}(\Sigma,\CC^4)$.
\label{thm:safinal}
\end{thm}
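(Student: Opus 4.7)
The plan is to follow the strategy of Theorem \ref{thm:sota}, now carried out for the two-parameter operator $D_{\tau,\eta}$. First I would identify the adjoint $D_{\tau,\eta}^*$ with the restriction of the maximal operator $D_{\max}$ to those $u = u_+\oplus u_-\in\cD(D_{\max})$ satisfying the transmission condition
\[
i(\alpha\cdot n)(u_-|_\Sigma - u_+|_\Sigma) = \tfrac12\bigl(\tau\, 1_4 + \eta\beta\bigr)(u_+|_\Sigma + u_-|_\Sigma)
\]
in $H^{-1/2}(\Sigma,\CC^4)$; this is an integration-by-parts/density computation on the dense subspace $E$ of \eqref{eqn:maxdomain}, using that weak traces exist in $H^{-1/2}$. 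The non-critical case then reduces to proving $\cD(D_{\tau,\eta}^*)=\cD(D_{\tau,\eta})$, i.e.\ that the traces of any $u\in\cD(D_{\tau,\eta}^*)$ actually belong to $H^{1/2}(\Sigma,\CC^4)$, since elliptic regularity (\cite[Prop.~2.16]{OBV18}) then upgrades $u_\pm$ to $H^1(\Omega_\pm,\CC^4)$.

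\textbf{Key linear system.} By Lemma \ref{lem:dirsum} the ``cross'' components $\mathcal{C}_\pm(u_\mp|_\Sigma)$ are automatically in $H^{1/2}$, so the task reduces to showing $\mathcal{C}_\pm(u_\pm|_\Sigma)\in H^{1/2}(\Sigma,\CC^4)$. Projecting the transmission condition onto $\ran\mathcal{C}_\pm$ and using the (anti)commutations of $\mathcal{C}_\pm$ with $(\alpha\cdot n)$ and $\beta$ modulo the smoothing operator $K$ of \eqref{def:anticom}, one obtains a system analogous to \eqref{eqn:sysinv} of the form
\[
A_{\tau,\eta}\begin{pmatrix}\mathcal{C}_+(u_-|_\Sigma)\\ \mathcal{C}_-(u_+|_\Sigma)\end{pmatrix}
 = B_{\tau,\eta}\begin{pmatrix}\mathcal{C}_+(u_+|_\Sigma)\\ \mathcal{C}_-(u_-|_\Sigma)\end{pmatrix}
 + F_{\tau,\eta}\begin{pmatrix} K(u_+|_\Sigma - u_-|_\Sigma)\\ K(u_+|_\Sigma - u_-|_\Sigma)\end{pmatrix}
\]
in $H^{-1/2}(\Sigma,\CC^8)$, with $A_{\tau,\eta}, B_{\tau,\eta}, F_{\tau,\eta}\in C^1(\Sigma,\CC^{8\times 8})$. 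A direct algebraic computation, exploiting the anticommutation $\{\beta,\alpha\cdot n\}=0$, shows that the pointwise determinant of $A_{\tau,\eta}$ is a nonzero constant times a power of $4-\tau^2+\eta^2$, so $A_{\tau,\eta}$ is invertible precisely when $\tau^2-\eta^2\neq 4$. The $C^2$ regularity of $\Sigma$ ensures via \cite[Prop.~2.8]{OBV18} that $K$ is smoothing from $H^{-1/2}$ to $H^{1/2}$, so the right-hand side above already lies in $H^{1/2}$, forcing $\mathcal{C}_\pm(u_\pm|_\Sigma)\in H^{1/2}(\Sigma,\CC^4)$ as needed for self-adjointness.

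\textbf{Critical case and main obstacle.} For $\tau=\pm 2,\eta=0$, the symbol $A_{\pm 2,0}$ loses pointwise invertibility and the regularity bootstrap above fails. Following \cite{OBV18,BH17}, essential self-adjointness is obtained by verifying $\overline{D_{\pm 2,0}} = D_{\pm 2,0}^*$ directly; the inclusion $\cD(D_{\pm 2,0})\subsetneq\cD(\overline{D_{\pm 2,0}})$ is then made strict by choosing any $f\in H^{-1/2}(\Sigma,\CC^4)\setminus H^{1/2}(\Sigma,\CC^4)$ in the kernel of the pointwise $A_{\pm 2,0}$ and observing that $u=\mp i\Phi\bigl((\alpha\cdot n)f\bigr)$ lies in $\cD(\overline{D_{\pm 2,0}})$ but has traces in $H^{-1/2}\setminus H^{1/2}$, so cannot be $H^1$ on either side. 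The main obstacle is the determinant analysis of $A_{\tau,\eta}$: because $\beta$ anticommutes with $\alpha\cdot n$, mixing electrostatic and scalar interactions produces the \emph{indefinite} quantity $\tau^2-\eta^2$ rather than a sum of squares, and the interplay between $\mathcal{C}_\pm$, $\beta$, $\alpha\cdot n$ and $K$ on the different Sobolev scales must be tracked carefully. A related delicate point is the smoothing property of $K$ which, unlike its mere $L^2$-compactness, genuinely relies on the hypothesis $\Sigma\in C^2$.
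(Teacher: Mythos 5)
Your proposal is sound in outline, but it is worth being precise about how it relates to the paper: for the non-critical mixed case the survey does not argue as you do, it simply invokes \cite{BEHL19}, where self-adjointness for $\tau^2-\eta^2\neq 4$ is obtained in the quasi-boundary-triple framework by proving bounded invertibility of $1_4+(\tau 1_4+\eta\beta)C_{z,s}$ on $L^2(\Sigma,\CC^4)$, using an exact algebraic identity for $C_{z,s}(\alpha\cdot n)$, the anticommutation $\{\beta,\alpha\cdot n\}=0$ and the compactness of $K$; this simultaneously produces the Krein-type resolvent formula \eqref{eqn:resform} exploited in Section \ref{sec:strucspec}. What you propose instead is to push the $H^{-1/2}$/$H^{1/2}$ Calder\'on-projector bootstrap of \cite{OBV18} (the strategy the survey outlines for Theorem \ref{thm:sota}, via Lemma \ref{lem:dirsum} and the system \eqref{eqn:sysinv}) to the two-parameter setting, with the invertibility threshold $4-\tau^2+\eta^2$ replacing $4-\tau^2$. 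That route is viable and buys the stronger conclusion $\cD(D_{\tau,\eta}^*)\subset H^1(\Omega_+,\CC^4)\oplus H^1(\Omega_-,\CC^4)$ directly, treating critical and non-critical cases in one framework; the route the paper cites buys the resolvent formula and its spectral consequences with much less trace bookkeeping, but says nothing about the critical case, for which both you and the paper fall back on \cite{OBV18,BH17}.

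Two steps in your sketch need more than you give them. First, deriving the analogue of \eqref{eqn:sysinv} with $\eta\neq 0$ requires not only the smoothing property of $K$ from \eqref{def:anticom} but also that $\beta$ commutes with $\mathcal{C}_\pm$ modulo smoothing operators; this is true because the anticommutator of $C_s$ with $\beta$ only involves the weakly singular $m\beta$-part of the fundamental solution of Proposition \ref{prop:def-fundsol}, but it is a separate verification, and your central claim that $\det A_{\tau,\eta}$ is a nonzero constant times a power of $4-\tau^2+\eta^2$ is asserted rather than computed — it is the crux of the whole argument (it is consistent with the relation $\big((\tau 1_4+\eta\beta)C_s\big)^2=\tfrac{\tau^2-\eta^2}{4}1_4$ modulo smoothing, but that computation must be done). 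Second, in the critical case the function $u=\mp i\Phi\big((\alpha\cdot n)f\big)$ with $f$ valued in the kernel of the pointwise symbol $A_{\pm2,0}$ does not, as written, satisfy the transmission condition: killing the principal part leaves the $B$- and $K$-terms, which lie in $H^{\frac12}(\Sigma,\CC^4)$, so $u$ has to be corrected by an $H^1$ solution of a regular transmission problem — this is exactly the content of \cite[\S 4.4]{OBV18} and cannot be skipped. Likewise the identity $\overline{D_{\pm2,0}}=D_{\pm2,0}^*$, which is the genuinely hard part of \cite{BH17,OBV18}, is taken on faith in your sketch, though admittedly the survey does the same by citation.
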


The missing part in this program is to understand until which extent the case $\tau^2 - \eta^2 = 4$ share the same features as the pure electrostatic critcal case $\tau = \pm 2$.

Let us finish this paragraph with a remark about the confinement of particles inside and outside the shell $\Sigma$. This condition is already stated in the seminal paper \cite[\S V.]{DES89} and developed in \cite[\S 5.]{AMV15}.
\begin{rem}\label{rmk:confinement} If $\eta^2 - \tau^2 = 4$ the shell generates confinement which physically means that it becomes impenetrable to the particles. This is nothing but a consequence of the fact that the traces of functions in the domain of $D_{\eta,\tau}$ are not coupled in this case and the operator can be rewritten as the direct sum of two operators with boundary conditions in $\Sigma$: one in $\Omega_+$ and another in $\Omega_-$ (see {\it e.g.} \cite[Lemma 3.1. (ii)]{BEHL19}. In the special case $\tau=0$ and $\eta =\pm 2$, one can easily observe this fact. For example, for $\tau = -2$, $D_{0,-2}$ can be rewritten as the direct sum
\[
	D_{0,-2} = D_+ \oplus D_-,
\]
where the self-adjoint operators $D_\pm$ are defined as
\[\left\{
\begin{array}{rcl}
	\cD(D_\pm) &:=& \{ u \in H^{1}(\Omega_\pm,\CC^4) :  \mathcal{B}_\pm u|_\Sigma = u|_\Sigma\},\\
	D_{\pm} u &=& D u,
\end{array}\right.
\]
where $\mathcal{B}_\pm := \mp i \beta (\alpha\cdot n)$. One recognizes the direct sum of two operators introduced in the late 60's and intensively studied in the physics literature (see the initial work \cite[\S IV]{Bo68} and the works \cite{Ch75,CJJT75,CJJTW74,Jo75}). These operators aim to model the confinement of quarks in hadrons and they are often referred to as {\it MIT bag operators} (see \cite{ALTR17} for a mathematical study).
\end{rem}

\subsection{Open problems}\label{subsec:saresult}
We conclude this section with some problems which are still open regarding the self-adjointness of the operator $D_{\tau,\eta}$ introduced in Definition \ref{def:final}.

\begin{oppb}\label{Oppb:sob}In the specific case $\tau^2 - \eta^2 = 4$ and for a smooth shell $\Sigma$, have the functions in the domain $\mathcal{D}(\overline{D_{m,\tau,\eta}})$ any Sobolev regularity ?
\end{oppb}
Open problem \ref{Oppb:sob} is partially answered in \cite[Thm. 5.9.]{BH17} where the authors consider a pure electrostatic shell interaction ($\eta=0$). When $\Sigma$ contains an open set included in a plane, a rather surprising spectral property appears: $0$ belongs to the essential spectrum of $\overline{D_{\tau,\eta}}$ which prevent the domain $\mathcal{D}(\overline{D_{\tau,\eta}})$ to be included in any Sobolev space $H^s(\Omega_+,\mathbb{C}^4) \oplus H^s(\Omega_-,\mathbb{C}^4)$ for all $s>0$. It is reminiscent of a similar phenomenon occurring in the study of metamaterials for which the geometry of the shell plays an crucial role to determine if whether or not some Sobolev regularity can be expected (see \cite{BDR99,BK18,CPP16}).

\begin{oppb}To our knowledge, all known results on self-adjointness of $D_{\tau,\eta}$ deal with sufficiently smooth shells $\Sigma$ (at least of class $C^2$). One may ask until what extent these results also hold for Lipschitz domains ?

In particular, the special class of corner geometries would deserve to be investigated. Indeed, it is known in the non-relativistic case that corners may generate interesting spectral features (see for instance \cite{DR14} for a broken line interaction). In this direction, let us mention the recent work \cite{PVDB19} in which the two-dimensional counterpart of $D_{0,\eta}$ is analyzed for the special case of a curves with finitely many corners.
\end{oppb}

\section{Approximation procedure}\label{sec:approxproc}
In this section we discuss the problem of the approximation of Dirac operators coupled with $\delta$-shell interactions using regularized hamiltonians. The main goal of this section is to recover Dirac operators coupled with $\delta$-shell interactions as limit of regularized hamiltonians as it is usually done in the non-relativistic setting (see \S \ref{sub:approx-Schr}).

The one dimensional version of this problem is tackled in \cite{Sebaklein} where it is proved that the regularizing sequence of hamiltonians converge in the norm resolvent sense to a Dirac point interaction: it answers question (Q3) in the relativistic setting.

However, when trying to answer question (Q4), a most surprising effect appears: the coupling constant in front of the point interaction depends non-linearly of the approximating sequence. This non-linear effect, understood as a reminiscence  of \emph{Klein's Paradox}, is a purely relativistic phenomenon due to the unboundedness neither from above nor below of the Dirac operator (see the original work of Klein \cite{Kle29} as well as \cite[\S 4.5]{Tha05} for a detailed explication of this phenomenom).

In dimension three, the problem is not entirely solved and this part of the review focus on the results presented in \cite{MPKlein} where a similar non-linear effect is exhibited. Following \cite{MPspherical}, we also discuss the special case $\Sigma = \mathbb{S}^2$, for which more properties can be deduced.

In \S \ref{subsec:mainresepsilon} we describe the current state of the art and \S \ref{subsec:squeezexpl} explains the main tools used to obtain these results. Finally, \S \ref{subsec:oppbepsilon} concludes this section with some problems which are still open about this approximation procedure.

\subsection{Main results}\label{subsec:mainresepsilon}

Let us start by defining the family of approximating potentials rigorously. In this section $\Omega$ is of class $C^2$ and for $\epsilon$ small enough, we introduce the tubular neighborhood of $\Sigma$ as
\[
\Omega_\epsilon := \{ x_\Sigma + t n(x_\Sigma) : x_\Sigma \in \Sigma, t\in (-\epsilon,\epsilon)\},
\]
where $n(x_\Sigma)$ is the normal to $\Sigma$ in $x_\Sigma$ pointing outward $\Omega_+$.

Fix $\epsilon_0>0$ sufficiently small in order for $\Omega_{\epsilon_0}$ to be in one-to-one correspondance with $\Sigma\times(-\epsilon_0,\epsilon_0)$. Consider $V \in L^\infty(\mathbb{R})$ with support in $[-\epsilon_0,\epsilon_0]$ from which we construct the family of squeezing potentials $(V_\epsilon)_{0<\epsilon<\epsilon_0}$ defined as
\begin{equation}\label{eqn:squeezpot}
	V_\epsilon(x) = \frac{\epsilon_0}{\epsilon} V\Big(\frac{\epsilon_0}{\epsilon}t\Big)\mathds{1}_{\Omega_\epsilon}(x),
\end{equation}
where for $x \in \Omega_\epsilon$, $t$ denotes its normal component in the decomposition $x = x_\Sigma + t n(x_\Sigma)$.

The main result of this section roughly reads as follows (see \cite[Theorem 1.2.]{MPKlein}).

\begin{thm} Let $V \in L^\infty(\mathbb{R})$ with compact support in $[-\epsilon_0,\epsilon_0]$ sufficiently small (which is precised further on). Let $(V_\epsilon)_{0<\epsilon<\epsilon_0}$ be the family of squeezing potentials constructed from $V$ as in \eqref{eqn:squeezpot}.

The following convergences hold in the strong resolvent sense.
\[
	D_{\rm free} + V_\epsilon 1_4 \underset{\epsilon\to0}{\longrightarrow} D_{\tau_V,0},\quad D_{\rm free} + V_\epsilon\beta \underset{\epsilon\to0}{\longrightarrow} D_{0,\eta_V},
\]
where the coupling constants $\tau_V,\eta_V \in \mathbb{R}$ depend non-linearly on the potential $V$.
\label{thm:cvgepsilon}
\end{thm}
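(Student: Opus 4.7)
My strategy is to reduce the three-dimensional problem to a family of one-dimensional scattering problems parameterised by points of $\Sigma$, in the spirit of \cite{Sebaklein}. Introduce tubular coordinates $x = x_\Sigma + t\, n(x_\Sigma)$ on $\Omega_{\epsilon_0}$; in these coordinates the free Dirac operator reads
\[
D_{\rm free} = -i(\alpha\cdot n)\partial_t + T_{\rm tan}(t) + m\beta,
\]
where $T_{\rm tan}(t)$ collects tangential derivatives along $\Sigma$ and curvature terms that remain bounded uniformly in $t\in(-\epsilon_0,\epsilon_0)$. Rescaling the normal variable by $s = (\epsilon_0/\epsilon) t$, the singular potential $V_\epsilon$ becomes the fixed function $V(s) M$ (with $M=1_4$ or $M=\beta$), while the normal derivative and mass/curvature/tangential terms decouple in scale. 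The point is that, as $\epsilon\to 0$, the normal operator $-i(\alpha\cdot n)\partial_s + V(s) M$ dominates the layer dynamics and every other term is a lower-order perturbation.

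Next I would analyse the associated frozen 1D problem. For each $x_\Sigma\in\Sigma$, consider $-i(\alpha\cdot n(x_\Sigma))\partial_s \psi + V(s) M \psi = 0$ on $s\in[-\epsilon_0,\epsilon_0]$. Its fundamental solution is the transfer matrix $\mathcal{T}_V$ connecting $\psi(-\epsilon_0)$ to $\psi(\epsilon_0)$. In the electrostatic case $M=1_4$, since $(\alpha\cdot n)^2 = 1_4$ and the matrix is constant in $s$, one obtains
\[
\mathcal{T}_V = \exp\bigl(i(\alpha\cdot n)\, V_{\rm int}\bigr) = \cos(V_{\rm int})\,1_4 + i\sin(V_{\rm int})(\alpha\cdot n),\qquad V_{\rm int} := \int_{\RR} V(s)\,\dd s.
\]
Matching this transmission relation against the jump condition $\tfrac{\tau}{2}(u_+|_\Sigma + u_-|_\Sigma) = i(\alpha\cdot n)(u_-|_\Sigma - u_+|_\Sigma)$ from Definition \ref{def:opcrit} yields
\[
\tau_V = 2\tan\!\Big(\tfrac{V_{\rm int}}{2}\Big),
\]
which is the advertised non-linear dependence and reveals the Klein-paradox origin: $V_{\rm int}$ near $\pi$ modulo $2\pi$ yields blow-up, and not every $\tau$ is reached. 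For the Lorentz-scalar case $M=\beta$, the anticommutation $\{\alpha\cdot n,\beta\}=0$ gives an analogous expression for $\eta_V$, implementing the scalar jump condition.

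The third step is strong resolvent convergence itself. Fix $\lambda\in\CC\setminus\RR$ and $f\in L^2(\RR^3,\CC^4)$, and set $u_\epsilon := (D_{\rm free} + V_\epsilon M - \lambda)^{-1} f$. By the uniform resolvent bound $\|u_\epsilon\|_{L^2}\le |\mathrm{Im}\,\lambda|^{-1}\|f\|_{L^2}$, one extracts a weakly convergent subsequence $u_\epsilon \rightharpoonup u_*$. Because $V_\epsilon$ is supported on $\Omega_\epsilon$, which collapses onto $\Sigma$, one shows that $(D-\lambda)u_* = f$ in $\Omega_\pm$ and that $u_*|_{\Omega_\pm}\in H^1(\Omega_\pm,\CC^4)$ by ellipticity away from $\Sigma$. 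The crucial input is the layer analysis of Step 2: integrating the equation across $\Omega_\epsilon$ and using the rescaled 1D dynamics shows that the two trace limits $u_\pm|_\Sigma$ of $u_*$ are related through $\mathcal{T}_V$, i.e., they satisfy the jump condition of $D_{\tau_V,0}$ (resp.\ $D_{0,\eta_V}$). Uniqueness of the limit operator (from Theorem \ref{thm:safinal}) then identifies $u_* = (D_{\tau_V,0} - \lambda)^{-1} f$, and the full sequence converges.

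\textbf{Main obstacle.} The hard step is the rigorous justification of the layer analysis producing the jump condition, i.e.\ proving that the transfer matrix $\mathcal{T}_V$ governs the passage across $\Omega_\epsilon$ in the genuinely three-dimensional problem. One must control $T_{\rm tan}$ and the mass term uniformly as $\epsilon \to 0$ against the singular factor $\epsilon_0/\epsilon$, using the Jacobian of the tubular change of variables and cutoffs localised at scale $\epsilon$. This is also where the smallness assumption on $\epsilon_0$ enters: on one hand to ensure well-defined tubular coordinates (which the paper already invokes), on the other hand to keep $|V_{\rm int}|$ away from the resonant values $\pi + 2\pi\ZZ$ that correspond precisely to the critical strengths $\tau=\pm 2$ unreachable by this approximation scheme. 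Only strong (not norm) resolvent convergence is claimed, which is consistent with the fact that tangential oscillations inside the layer are not controlled uniformly in $f$.
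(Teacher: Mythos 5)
Your plan is a genuinely different route from the paper's, but as written it has a real gap: the entire analytic content of the theorem sits in the step you yourself flag as the ``main obstacle'', and nothing in the proposal indicates how to close it. From the uniform bound $\|u_\epsilon\|_{L^2}\le|\operatorname{Im}\lambda|^{-1}\|f\|_{L^2}$ you only get weak $L^2$ compactness, and weak convergence gives no control on the traces of $u_\epsilon$ on $\partial\Omega_\epsilon$, nor does it justify ``integrating the equation across $\Omega_\epsilon$'': inside the layer the equation carries the $O(\epsilon^{-1})$ potential, so there is no uniform $H^1$ bound near $\Sigma$, there is no compactness in the tangential variables on which $T_{\rm tan}$ acts, and the frozen ODE $-i(\alpha\cdot n)\partial_s\psi+V(s)M\psi=0$ discards both the right-hand side $zu_\epsilon+f$ and the normal--tangential coupling. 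Turning the transfer-matrix picture into the statement that every weak limit point belongs to $\cD(D_{\tau_V,0})$ and satisfies the jump condition of Definition \ref{def:opcrit} requires quantitative layer estimates, uniform in $\epsilon$; this is precisely what is missing. Two further points: passing from weak to strong resolvent convergence needs an argument (for self-adjoint operators one can use $\|R_\epsilon(\pm i)f\|^2=\tfrac{1}{2i}\langle f,(R_\epsilon(i)-R_\epsilon(-i))f\rangle$ together with self-adjointness of the limit from Theorem \ref{thm:safinal}), which you do not supply; and your identification of the resonant values is off: with $\tau_V=2\tan(V_{\rm int}/2)$ the critical strengths $\tau_V=\pm2$ correspond to $V_{\rm int}\equiv\pm\pi/2\pmod{2\pi}$, while $V_{\rm int}\in\pi+2\pi\ZZ$ is where $\tau_V$ blows up. Moreover, the smallness hypothesis of \cite{MPKlein} ($\|V\|_{L^1(\RR)}<1$) is not there to avoid resonances; it is what makes a Neumann series converge, i.e.\ it gives $\|B_{V,\epsilon}(z)\|<1$ uniformly in $\epsilon$.

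For comparison, the paper's proof (sketched in \S\ref{subsec:squeezexpl}) never passes through tubular rescaling of the PDE or a compactness argument: it writes the two Krein-type resolvent formulas \eqref{eq:resolvent-free-delta}, proves strong convergence of the integral operators $A_{V,\epsilon}(z)$, $B_{V,\epsilon}(z)$, $C_{V,\epsilon}(z)$, where the limit of $B_{V,\epsilon}$ acquires the extra operator $B_V'$ responsible for the non-linear renormalization, and uses the $L^1$-smallness to invert $1+B_{V,\epsilon}(z)$ uniformly so that \eqref{eq:conv.ABCV} follows; the coupling constants then come out through $(1\mp\cK_V^2)^{-1}$ as in \eqref{eq u,v}. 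Your transfer-matrix heuristic is a good way to \emph{guess} the constants --- it reproduces the $2\tan$/$2\tanh$ dependence of the explicit example in Remark \ref{rem:confine} --- but in its present form it derives the answer rather than proving the convergence.
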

Theorem \ref{thm:cvgepsilon} calls for numerous observations. First, the hypothesis of smallness on $V$ can be found, {\it e.g.} in \cite[Definition 1.1. \& Theorem 1.2.]{MPKlein}. It implies that $V$ is sufficiently small in $L^1(\mathbb{R})$.

Second, question (Q3) is answered and the authors prove a convergence which holds in the strong resolvent sense. It is not clear whether or not the norm resolvent convergence holds. In particular, the strategy used by the authors of \cite{MPKlein} suggests that a sequence of compact operators would converge to a non-compact operator, in the norm sense.

Finally, question (Q4) is explicitly answered. Namely, set
\begin{equation}\label{eq u,v}
\begin{split}
u(t):=|\epsilon_0 V (\epsilon_0 t)|^{1/2},\quad v(t):=\sign(V(\epsilon_0 t))u(t),\\ \mathcal{K}_V f(t):=\frac{i}{2}\int_{\RR}u(t)\sign(t-s)v(s)f(s)\,ds.
\end{split}
\end{equation}
$\mathcal{K}_V$ is an integral operator with kernel in $L^2(\mathbb{R}\times\mathbb{R})$, thus $\mathcal{K}_V$ is Hilbert-Schmidt and
\[
	\|\mathcal{K}_V\|_{HS} = \|u\|_{L^2(\mathbb{R})} \|v\|_{L^2(\mathbb{R})} = \|V\|_{L^1(\mathbb{R})}.
\]
Provided the operator $(1\pm\cK_V^2)$ is invertible, the coupling constants $\tau_V$ and $\eta_V$ are given by the following two formulas:
\begin{gather}\label{def lambda elec}
\tau_V:= \int_\RR v(t)\,((1-\cK_V^2)^{-1}u)(t)\,dt\in\RR, \\\label{def lambda scalar}
\eta_V:=\int_\RR v(t)\,((1+\cK_V^2)^{-1}u)(t)\,dt\in \RR.
\end{gather}
The hypothesis in Theorem \ref{thm:cvgepsilon} on the smallness of $V$ ensures that $(1\pm\cK_V^2)$ is invertible.

\begin{rem}\label{rem:confine} The hypothesis in Theorem \ref{thm:cvgepsilon} on the smallness of $V$ implies that $\|V\|_{L^1(\RR)} < 1$. Hence, let $\Lambda \in \{\tau_V,\eta_V\}$, we have:
\begin{align*}
|\Lambda| \leq \sum_{n\geq0} \int_{\RR} |v(t)| |\cK_V^{2n} u(t)| dt &\leq \|v\|_{L^2(\RR)}\|u\|_{L^2(\RR)} \sum_{n\geq 0} \|\cK_V\|_{HS}^{2n}\\
& = \frac{\|V\|_{L^1(\RR)}}{1 - \|V\|_{L^1(\RR)}}.
\end{align*}
As the smallness condition on $V$ in \cite{MPKlein} relies on the existence of $\delta := \delta(\Sigma) \in (0,\frac12)$ for which $\|V\|_{L^1(\RR)} \leq 2\delta < 1$, the previous equation implies that $|\Lambda|\leq\frac{2\delta}{1-2\delta}$ and the range of possible coupling constants is not the entire real line.

Actually, there is a simple example for which $\tau_V$ and $\eta_V$ can be computed explicitly. Let $\epsilon_0$ be as defined in \eqref{eqn:squeezpot} and $\delta= \delta(\Sigma) \in (0,\frac12)$. Set $V = \frac12\tau \mathds{1}_{(-\epsilon,\epsilon)}$ for $\epsilon \in(0,\epsilon_0)$ and $0<|\tau|\epsilon<2\delta_\Sigma$. One obtains
\[
\tau_V = 2 \tan\Big(\frac{1}2\tau\eta\delta(\Sigma)\Big),\quad \eta_V = 2 \tanh\Big(\frac12\tau\eta\delta(\Sigma)\Big).
\]
In particular $\tau_V, \eta_V \in(-2,2)$ and the endpoints $\pm 2$ can not be reached.

For the electrostatic coupling constant $\tau_V$, it corresponds to the critical coupling constants discussed in \S \ref{subsec:critstreng} for which the operator $D_{\tau,0}$ is essentially self-adjoint. Similarly, for the Lorentz-scalar coupling constant $\eta_V$ it corresponds to the confinement case described in Remark \ref{rmk:confinement}.
\end{rem}

\subsection{How to obtain the limit of the free Dirac operator coupled with squeezing potentials ?}\label{subsec:squeezexpl}

Let us tell a few words on the strategy developed in \cite{MPKlein} to prove Theorem \ref{thm:cvgepsilon}, focusing on the pure electrostatic case (the Lorentz scalar case being handled similarly).

The key idea is a resolvent formula for both the operators $D_{\rm free} + V_\epsilon$ and $D_{\tau_V,0}$. Indeed, for $z \in \CC\setminus \RR$ there holds:
\begin{equation}\label{eq:resolvent-free-delta}
\begin{split}
(D_{\tau_V}-z)^{-1}&=
(D_{\rm free}-z)^{-1}+
\tau_V A(z)(1+\tau_V B(z))^{-1}C(z),\\
(D_{\rm free}+\Vep-z)^{-1}&= 
(D_{\rm free}-z)^{-1}+
A_{V,\epsilon}(z)\left(1+ B_{V,\epsilon}(z)\right)^{-1}C_{V,\epsilon}(z).
\end{split}
\end{equation}
Here, $A(z), B(z), C(z)$ and $A_{V,\epsilon}(z),B_{V,\epsilon}(z), C_{V,\epsilon}(z)$ are bounded operators defined in \cite[Eqn. (2-13) \& Eqn. (3-5)]{MPKlein}. They are integral operators involving the fundamental solution of $D_{\rm free}-z$ introduced in the spirit of the one given in Proposition \ref{prop:def-fundsol}.

Thanks to \eqref{eq:resolvent-free-delta} it is enough to prove that in the strong sense
\begin{equation}\label{eq:limit-resolvent-bis}
A_{V,\epsilon}(z)\left(1+ B_{V,\epsilon}(z)\right)^{-1}C_{V,\epsilon}(z)
 \underset{\epsilon\to 0}{\longrightarrow}
\tau_V A(z)(1+\tau_V B(z))^{-1}C(z).
\end{equation}
Analyzing separately the convergence of these operators one obtains that
\begin{equation}\label{eq:convtoprove}
A_{V,\epsilon}(z)\underset{\epsilon\to 0}{\longrightarrow} A_{V,0}(z),\quad
B_{V,\epsilon}(z)\underset{\epsilon\to 0}{\longrightarrow} B_{V,0}(z)+B'_V,\quad
C_{V,\epsilon}(z)\underset{\epsilon\to 0}{\longrightarrow} C_{V,0}(z),
\end{equation}
where $A_{V,0}(z), B_{V,0}(z), B'_V(z)$ and $C_{V,0}(z)$ are defined in \cite[Eqn. (3-8)]{MPKlein}.

As the fundamental solution of $D_{\rm free} - z$ is not integrable near the origin (unlike for Schr\"odinger operators), the limiting operators may involve {\it singular integral operators} which calls for a thorough analysis. In particular $B_{V,\epsilon}$ converges in the strong resolvent sense to the sum of a {\it singular integral operator} $B_{V,0}$ and an integral operator $B'_V$. This is reminiscent of the Plemelj-Sokhostki formula \eqref{eqn:PSjump} (see \cite[Section 5]{MPKlein} for more details).

Now, the main difficulty is to prove that the operator $(1+B_{V,\epsilon})^{-1}$ converges in the strong sense to $(1+B_{V,0} + B_V')^{-1}$. To do so, one observes that
\[
	\|B_{V,\epsilon}(z)\|_{L^2(\RR^3,\CC^4) \to L^2(\RR^3,\CC^4)} \leq C(\Sigma) \|V\|_{L^1(\mathbb{R})},
\]
for some constant $C(\Sigma)>0$ depending on the shell $\Sigma$. Hence, if $\|V\|_{L^1(\mathbb{R})}$ is sufficiently small, $\|B_{V,\epsilon}(z)\|_{L^2(\RR^3,\CC^4) \to L^2(\RR^3,\CC^4)}< 1$ and one concludes that the following convergence holds:
\begin{equation}\label{eq:conv.ABCV}
A_{V,\epsilon}(1+B_{V,\epsilon})^{-1}C_{V,\epsilon}\underset{\epsilon\to 0}{\longrightarrow}
A_{V,0}(1+B_{V,0}+B'_V)^{-1}C_{V,0}  \quad \text{strongly}.
\end{equation}

The operator $B_V'$ is the one responsible for the non-linear dependance of $\tau_V$ on $V$. This operator does not appear in the analogous limit for the Schr\"odinger operator (see \cite[Section 3A]{MPKlein} for further details).

\subsection{Remarks and open problems}\label{subsec:oppbepsilon} For $\eta,\tau\in\RR$, the convergence of the operators
\[
D_{\rm free} + (\tau1_4 + \eta\beta)V_\epsilon,
\]
to a Dirac operator coupled with a shell interaction is probably the most significant physical justification for the investigation of these idealized models. Nevertheless, the actual state of the art answers only partially this question.

First, the smallness hypothesis in Theorem \ref{thm:cvgepsilon} is rather unsatisfactory because it is not well understood if it is a purely technical obstruction or if there is deeper physical signification for it to exist.
\begin{oppb}An answer to the following questions should shed some light on the problem of approximating the shell potentials for the Dirac operator.
\begin{enumerate}[label=(\roman*)]
\item Can one drop the smallness hypothesis on the potential $V$ in {\cite[Theorem 1.2.]{MPKlein}} ?
\item Is this restriction responsible for the range of possible $\eta_V,\tau_V$ to be different from the whole real line ?
\item If yes, is there another scaling of squeezing potentials which can give the remaining coupling constants ? In particular, does the obstruction to recover the confinement cases evoked in Remark \ref{rem:confine} can be overcome ?
\end{enumerate}
\end{oppb}

Second, the strong resolvent convergence does not provide all the spectral picture of the initial operators. Namely, any $\lambda \in Sp(D_{{\tau_V,0}})$ can be obtained as a limit when $\epsilon\to 0$ of some $\lambda_{\epsilon} \in Sp(D_{\rm free} + V_\epsilon)$ but the converse statement may not hold and spectral information can be lost in the limiting process.

In \cite{MPspherical}, a first attempt to study the converse proposition for the special case of a pure electrostatic potential for the spherically symmetric shell $\Sigma = \mathbb{S}^2$ is dealt with. The main strategy is to decompose in partial wave subspaces as explained in \S \ref{subsec:DES} and study the convergence for the one-dimensional radial operators obtained by this decomposition. In order to state the main result for this special case, consider the spherically symmetric potential
\[
	V_\epsilon = \frac{\mu}{2\epsilon}\mathds{1}_{(1-\epsilon,1+\epsilon)}(|x|).
\]
The operator $D_{\rm free} + V_\epsilon1_4$ decomposes similarly as in \S \ref{subsec:DES} in a countable family of differential operators in the radial variable $r$ as $d + \frac{\mu}{2\epsilon}\mathds{1}_{(1-\epsilon,1+\epsilon)}(r)$ acting in $L^2(\RR,\CC^2)$ as
\begin{align}\label{eqn:partialwaveapprox}
	\big(d + \frac{\mu}{2\epsilon}\mathds{1}_{(1-\epsilon,1+\epsilon)}(r)\big)u = -i\sigma_2 u' + m \sigma_3u + \frac{\chi}{r}\sigma_1u + \frac{\mu}{2\epsilon}\mathds{1}_{(1-\epsilon,1+\epsilon)}(r)u,\\ \nonumber\ \text{for some }\chi \in \ZZ\setminus\{0\}.
\end{align}
The main result of \cite{MPspherical} reads as follows.
\begin{thm}[{\cite[Theorem IV.2.]{MPspherical}}] Set $\tau = 2\tan\big(\frac\mu2\big)$.  Let {${\lambda_\epsilon \in Sp_{dis}\Big(d + \frac{\mu}{2\epsilon}\mathds{1}_{(1-\epsilon,1+\epsilon)}\Big)}$}, the discrete spectrum of a partial wave operator acting as in \eqref{eqn:partialwaveapprox}.\\ If {${\lambda_\epsilon \underset{\epsilon\to 0}{\longrightarrow} \lambda \in (-m,m)}$} then $\lambda\in Sp_{dis}(d_\tau)$.
\label{thm:convepsilonsphere}
\end{thm}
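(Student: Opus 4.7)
The plan is to take a sequence of $L^2$-normalized eigenfunctions $u_\epsilon$ of $d+\frac{\mu}{2\epsilon}\mathds{1}_{(1-\epsilon,1+\epsilon)}$ associated to $\lambda_\epsilon$, extract a limit, and verify that it is a nontrivial eigenfunction of $d_\tau$ at $\lambda$. The argument splits into three pieces: compactness on compact subsets of $(0,1)\cup(1,\infty)$, a transfer-matrix analysis inside the shrinking shell, and a non-concentration estimate.

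For the compactness step, fix any compact $K\subset(0,1)\cup(1,\infty)$ and any $\epsilon$ small enough that $K$ is disjoint from $(1-\epsilon,1+\epsilon)$; on $K$, the eigenvalue equation reduces to the free partial-wave equation $du_\epsilon=\lambda_\epsilon u_\epsilon$. Since the coefficients of $d$ are smooth on $K$ and $(\lambda_\epsilon)$ is bounded, the ODE plus the normalization $\|u_\epsilon\|_{L^2}=1$ give uniform $H^1(K)$, hence $C^0(K)$, hence $C^1(K)$ bounds. By Arzel\`a-Ascoli and a diagonal extraction, pass to a subsequence (still denoted $u_\epsilon$) converging uniformly on compact subsets of $(0,1)\cup(1,\infty)$ to some $u$ solving $du=\lambda u$ on $(0,1)\cup(1,\infty)$.

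For the transfer-matrix step, substitute $r=1+\epsilon s$ with $s\in[-1,1]$ and set $\tilde u_\epsilon(s):=u_\epsilon(1+\epsilon s)$. Multiplying the eigenvalue equation by $\epsilon$ gives
\[
\tilde u_\epsilon'(s)=i\sigma_2\Big[\epsilon\lambda_\epsilon-\epsilon m\sigma_3-\frac{\epsilon\chi}{1+\epsilon s}\sigma_1-\frac{\mu}{2}\Big]\tilde u_\epsilon(s),
\]
so the bracketed matrix converges uniformly on $[-1,1]$ to the scalar $-\mu/2$ as $\epsilon\to 0$. Gronwall's lemma, combined with the uniform $C^0$ control on $u_\epsilon$ just to the left of $r=1$ coming from the previous step, shows that $\tilde u_\epsilon$ converges uniformly on $[-1,1]$ to the solution of $\tilde u'=-\tfrac{i\mu}{2}\sigma_2\tilde u$. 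Writing $J:=-i\sigma_2$, which satisfies $J^2=-I$, the explicit exponential yields $\tilde u(1)=e^{\mu J}\tilde u(-1)=(\cos\mu\,I+\sin\mu\,J)\tilde u(-1)$, that is, the boundary limits $u_+(1):=\lim_{r\to1^-}u(r)$ and $u_-(1):=\lim_{r\to1^+}u(r)$ satisfy $u_-(1)=(\cos\mu\,I+\sin\mu\,J)u_+(1)$. A direct computation with the half-angle identities $1+\cos\mu=2\cos^2(\mu/2)$ and $\sin\mu=2\sin(\mu/2)\cos(\mu/2)$ shows this rotation is equivalent to the symmetric $\delta$-shell jump condition defining $d_\tau$ with $\tau=2\tan(\mu/2)$.

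Finally, the same uniform bound $\|\tilde u_\epsilon\|_{C^0([-1,1])}\leq C$ yields the non-concentration estimate
\[
\int_{1-\epsilon}^{1+\epsilon}|u_\epsilon(r)|^2\,dr=\epsilon\int_{-1}^{1}|\tilde u_\epsilon(s)|^2\,ds\leq C\epsilon,
\]
so the $L^2$-mass of $u_\epsilon$ escaping from $(0,1)\cup(1,\infty)$ tends to $0$, giving $\|u\|_{L^2((0,\infty))}=1$ and hence $u\not\equiv 0$. Since $\lambda\in(-m,m)$, the free partial-wave equation admits only exponentially decaying $L^2$ solutions at infinity and the regular solution at the origin, so the limit $u$ lies in $\cD(d_\tau)$, satisfies $d_\tau u=\lambda u$, and therefore $\lambda\in Sp_{dis}(d_\tau)$. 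The main obstacle is exactly this non-concentration: \emph{a priori}, an eigenfunction could deposit most of its mass inside the shrinking support of the strong potential, and ruling this out forces us to upgrade the $L^2$ normalization into a pointwise bound on $u_\epsilon$ near $r=1$, which in turn requires the ODE regularity on compact sets \emph{away} from the shell in order to feed the Gronwall estimate inside the strip.
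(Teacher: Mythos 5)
Your mechanism is the right one (rescaling the shell to $[-1,1]$ turns the strong constant potential into the rotation $e^{\mu J}$, which is exactly the symmetric $\delta$-shell jump condition with $\tau=2\tan(\mu/2)$, up to fixing sign conventions), but there is a genuine gap at the non-triviality step. Your ``non-concentration'' estimate only controls the mass inside the shrinking shell, and local uniform convergence on compact subsets of $(0,1)\cup(1,\infty)$ together with $\int_{1-\epsilon}^{1+\epsilon}|u_\epsilon|^2\,dr=O(\epsilon)$ does \emph{not} give $\|u\|_{L^2((0,\infty))}=1$: Fatou only yields $\|u\|\le 1$, and a priori the unit mass can drift to $r\to\infty$ (or pile up near $r=0$), in which case $u\equiv 0$, the limiting relation $u(1^+)=e^{\mu J}u(1^-)$ is vacuous, and no eigenvalue of $d_\tau$ is produced. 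To close this you need tightness uniformly in $\epsilon$: on $(1+\epsilon,\infty)$ the eigenfunction is a multiple of the decaying solution of the free radial equation at energy $\lambda_\epsilon$, and you must show its $L^2$-tail beyond a fixed large radius is small uniformly for $\lambda_\epsilon$ in a compact subset of $(-m,m)$ (uniform Agmon/Bessel-type decay, with rate bounded below by $\inf_\epsilon\sqrt{m^2-\lambda_\epsilon^2}>0$); near $r=0$ you must use that $u_\epsilon$ is proportional to the regular solution ($\sim r^{|\chi|}$) with coefficient kept bounded by the normalization, again uniformly in the spectral parameter. Only with these two uniform estimates does smallness of the mass in the shell upgrade to $\|u\|=1$, i.e.\ to the traces $u(1^\mp)$ not both vanishing. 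A secondary, easily repaired, imprecision: the Gronwall step needs the datum $\tilde u_\epsilon(-1)=u_\epsilon(1-\epsilon)$ at a point moving with $\epsilon$, which uniform bounds on \emph{fixed} compacts do not give; you must first propagate the $C^0$ bound through the potential-free region up to $r=1-\epsilon$ (another Gronwall, starting from a point where the $L^2$ normalization gives a pointwise bound), and argue symmetrically on the outer side to identify $\lim u_\epsilon(1+\epsilon)$ with $u(1^+)$.

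For comparison, the proof in the cited reference does not run a compactness argument on eigenfunctions at all: it uses the explicit structure of the spherical problem, writing the eigenfunctions in terms of (modified) Bessel-type solutions outside the interaction region and the explicit constant-coefficient solution inside, so that $\lambda_\epsilon$ is a root of an explicit matching (secular) equation, and the theorem follows by passing to the limit in that equation; the uniform control at $r=0$ and $r=\infty$ that your argument is missing is precisely what those explicit solutions provide for free. Your softer route is attractive because it could extend beyond the spherical case, but it stands or falls on supplying the uniform decay/regularity estimates above.
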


Theorem \ref{thm:convepsilonsphere} gives an insight on the origin of the eigenvalues in the gap of the operator $D_{\tau,0}$. However, nothing prevents the existence of a sequence $(\lambda_{\epsilon_n})_{n\in\mathbb{N}}$ accumulating at $\pm m$, which reads:
\[
\lambda_{\epsilon_n} \underset{\epsilon_n\to 0}{\longrightarrow}\pm m.
\]
The particular case of a spherical shell illustrates a phenomenon which may also hold for any generic shell $\Sigma$ of class $C^2$.
\begin{oppb} Do we have, for some $\gamma_\epsilon >0$ such that $\gamma_\epsilon \underset{\epsilon\to0}{\longrightarrow} 0$ 
\[
	\mathds{1}_{(-m+\gamma_\epsilon,m-\gamma_\epsilon)}(D_{\rm free} + V_\epsilon1_4) \underset{\epsilon\to0}{\longrightarrow}\mathds{1}_{(-m,m)}(D_{\tau_V,0}) \quad?
\]
Here, for $a\leq b \in \RR$, $\mathds{1}_{(a,b)}(D_{\rm free} + V_\epsilon1_4)$ and $\mathds{1}_{(a,b)}(D_{\tau_V,0})$ denote the spectral projectors on the line segment $(a,b)$ of $D_{\rm free} + V_\epsilon1_4$ and $D_{\tau_V,0}$, respectively. The convergence should hold in the topology of bounded operators in $L^2(\RR,\CC^4)$ and the compactness of $\Sigma$ should play a major role, as well as the behavior of $\gamma_\epsilon$ with respect to $\epsilon$.
\end{oppb}

\section{Structure of the spectrum}\label{sec:strucspec}
In this section we discuss properties of the spectrum of the Dirac operator with a shell interaction $D_{\tau,\eta}$ introduced in Definition \ref{def:final}. \S \ref{subsec:genprop} is about a resolvent formula and its consequences on the spectrum of the operator $D_{\tau,\eta}$.
\S \ref{subsec:spec-asympt} deals with a the large mass limit $m\to +\infty$ for the operator $D_{0,\eta}$.

\subsection{A resolvent formula and its consequences}\label{subsec:genprop}
The first work giving a resolvent formula for the operator $D_{\tau,\eta}$ is \cite{AMV15}, followed by the works \cite{BEHL18, BH17,BEHL19}. We focus on the non-critical cases $\tau^2 - \eta^2 \neq 4$. In \cite[Theorem 3.4.]{BEHL19} this resolvent formula writes as
\begin{equation}\label{eqn:resform}
	(D_{\tau,\eta} - z)^{-1} = (D_{\rm free} - z)^{- 1} - \Phi_z \bigg(\Big(1_4 + (\tau 1_4 + \eta \beta)C_{z,s}\Big)^{-1}\beta(\eta 1_4 + \tau \beta)\bigg) \Phi_z^*;\quad z \in \CC\setminus\RR,
\end{equation}
where the operators $\Phi_z$, $\Phi_z^*$ and $C_{z,s}$ are reminiscent of the one introduced in the \underline{Second step} of \S \ref{subsec:genshell}. Namely, if $\phi_z$ is a fundamental solution of $(D_{\rm free} - z)$, one defines the bounded operator from $L^2(\Sigma,\CC^4)\to L^2(\RR^3,\CC^4)$ by 
\[
	\Phi_z f(x) := \int_{s \in \Sigma} \phi_z(x - s) f(s) ds.
\]
$\Phi_z^*$ is its adjoint and $C_{z,s}$ is obtained thanks to a Plemelj-Sokhostki jump formula (see \eqref{eqn:PSjump}):
\[
	C_{z,\pm }(f) := \lim_{\Omega_\pm \ni y \overset{nt}{\to}x} \Phi_z(f) \in L^2(\Sigma,\CC^2),\quad C_{z,\pm }(f) := \mp \frac{i}{2}(\alpha\cdot n)f + C_{z,s}(f)
\]
and $C_{z,s}$ is a singular integral operator bounded from $L^2(\Sigma,\CC^4)$ onto itself. It is proved in \cite[Lemma 3.3.]{BEHL19} that $\big(1_4 + (\tau 1_4 + \eta \beta)C_{z,s}\big)$ is invertible. For the resolvent formula to be fully justified, one needs some extra mapping properties of these integral operators. This is the purpose of the following remark.

\begin{rem}
Actually, the mapping properties of these integral operators could be improved using Sobolev spaces on the boundary, following the same steps as \cite[\S 2.1. \& \S 2.2.]{OBV18}. Namely, these operators can be extended into bounded operators
\begin{multline*}
	\Phi_z : H^{-\frac12}(\Sigma,\CC^4) \to \cD(D_{\rm max}),\quad \Phi_z^* : L^2(\RR^3,\CC^4) \to H^{\frac12}(\Sigma,\CC^4),\\
	C_{z,s} : H^s(\Sigma,\CC^4) \to H^{s}(\Sigma,\CC^4)\quad\text{for}\quad s\in\{\pm\frac12\}.
\end{multline*}
Moreover, be elliptic regularity, if $f \in H^{\frac12}(\Sigma)$ then $\Phi_z(f) \in H^1(\Omega_+)\oplus H^1(\Omega_-)$.
\end{rem}

The main results concerning the structure of the spectrum, consequence of the resolvent formula \eqref{eqn:resform} can be recast as follows (see \cite[Theorem 4.1. \& Corollary 4.3.]{BEHL19}).

\begin{thm} Let $\tau^2 - \eta^2 \neq 4$. The following holds.
\begin{enumerate}
\item The essential spectrum of $D_{\tau,\eta}$ is
\[
	Sp_{ess}(D_{\tau,\eta})  = \big(-\infty,-|m|\big]\cup\big[|m|,+\infty).
\]
\item The discrete spectrum of $D_{\tau,\eta}$ is finite.
\item There exists $C>0$ such that the discrete spectrum of $D_{\tau,\eta}$ is empty if  $|\tau +\eta| < C$ or $|\tau - \eta| < C$.
\end{enumerate}
\end{thm}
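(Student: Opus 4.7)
The plan is to derive all three assertions from the resolvent formula \eqref{eqn:resform}, which, writing $T_z := \bigl(1_4 + (\tau 1_4 + \eta\beta)C_{z,s}\bigr)^{-1}\beta(\eta 1_4 + \tau\beta)$, reads
\[
(D_{\tau,\eta}-z)^{-1} - (D_{\rm free}-z)^{-1} = -\Phi_z T_z \Phi_z^*, \qquad z\in\CC\setminus\RR.
\]

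For assertion (1) I would first establish that this resolvent difference is compact. Using the mapping properties listed in the remark following \eqref{eqn:resform}, $\Phi_z^*$ sends $L^2(\RR^3,\CC^4)$ into $H^{\frac12}(\Sigma,\CC^4)$, the middle factor $T_z$ preserves $H^{\frac12}(\Sigma,\CC^4)$ thanks to $\tau^2-\eta^2\neq 4$, and $\Phi_z$ then sends $H^{\frac12}(\Sigma,\CC^4)$ into $H^1(\Omega_+)\oplus H^1(\Omega_-)$ by elliptic regularity. Compactness of the composition back into $L^2(\RR^3,\CC^4)$ follows from Rellich--Kondrachov on the bounded component of $\RR^3\setminus\Sigma$, combined with the exponential decay of the fundamental solution $\phi_z$ at infinity (for $z\in\CC\setminus\RR$) to handle the unbounded component via a cut-off argument. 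Weyl's theorem then yields $Sp_{ess}(D_{\tau,\eta}) = Sp_{ess}(D_{\rm free})$, which is $(-\infty,-|m|]\cup[|m|,+\infty)$ by \eqref{eq:spec-free-dirac}.

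For assertion (2), Part (1) places the discrete spectrum inside the gap $(-|m|,|m|)$, so it suffices to show that only finitely many eigenvalues live there. A Birman--Schwinger-type principle built into \eqref{eqn:resform} shows that $z\in(-|m|,|m|)$ is an eigenvalue of $D_{\tau,\eta}$ if and only if $1_4 + (\tau 1_4 + \eta\beta)C_{z,s}$ fails to be invertible on $H^{\frac12}(\Sigma,\CC^4)$. The map $z\mapsto C_{z,s}$ is analytic in $\CC\setminus Sp(D_{\rm free})$ and, modulo an analytic family of compact operators (the singular part of its kernel being $z$-independent), does not depend on $z$. Analytic Fredholm theory then ensures that the non-invertibility set is discrete in the gap, and a uniform bound on the Birman--Schwinger operator up to the spectral edges $\pm|m|$ (valid because $m\neq 0$ keeps $\phi_z$ integrable near $\Sigma$) prevents accumulation and gives finiteness.

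For assertion (3), I would show that $1_4 + (\tau 1_4 + \eta\beta)C_{z,s}$ is invertible uniformly in $z$ on the gap as soon as one of $|\tau\pm\eta|$ is small enough. The spectral projectors $P_\pm := (1_4\pm\beta)/2$ of $\beta$ split $L^2(\Sigma,\CC^4)$ into two invariant subspaces on which $\tau 1_4 + \eta\beta$ reduces to multiplication by $\tau\pm\eta$ respectively. Writing $1_4 + (\tau 1_4 + \eta\beta)C_{z,s}$ as a $2\times 2$ block operator with respect to this splitting and applying a Schur complement manipulation, invertibility reduces to that of a scalar-dressed operator whose norm is controlled by the smaller of $|\tau+\eta|$ and $|\tau-\eta|$, uniformly in $z\in(-|m|,|m|)$. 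The universal constant $C>0$ is then extracted from $\sup_{z\in(-|m|,|m|)}\|C_{z,s}\|$, which is finite by the analytic Fredholm input of Part (2). The main obstacle will be precisely Part (3): the hypothesis is asymmetric in $\tau\pm\eta$ and a crude Neumann series would require both quantities to be small. The resolution demands a careful exploitation of the $\beta$-block structure via a Schur complement that absorbs the potentially large block into an invertible piece while leaving only the small parameter to close the estimate.
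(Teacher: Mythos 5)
Your general strategy --- reading \eqref{eqn:resform} as a Krein/Birman--Schwinger formula and deriving (1) from compactness of the resolvent difference plus Weyl's theorem, and (2)--(3) from the invertibility of $1_4+(\tau 1_4+\eta\beta)C_{z,s}$ --- is precisely the route of \cite{BEHL19}, which is all the survey itself invokes, so the framework is the right one. Point (1) is essentially fine, but you can avoid Rellich--Kondrachov in $\RR^3$ and the decay/cut-off step altogether: factor the difference as $\Phi_z\,T_z\,\iota\,\Phi_z^*$ with $\Phi_z^*:L^2(\RR^3,\CC^4)\to H^{\frac12}(\Sigma,\CC^4)$ bounded, $\iota:H^{\frac12}(\Sigma,\CC^4)\hookrightarrow L^2(\Sigma,\CC^4)$ compact because $\Sigma$ is a compact surface, and $\Phi_z:L^2(\Sigma,\CC^4)\to L^2(\RR^3,\CC^4)$ bounded. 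In (2), however, the sentence ``a uniform bound on the Birman--Schwinger operator up to the spectral edges prevents accumulation'' is not a proof: a uniformly bounded analytic Fredholm family can perfectly well fail to be invertible along a sequence $z_n\to\pm m$ (edge accumulation of eigenvalues does occur for other perturbations of $D_{\rm free}$). Analytic Fredholm theory only gives local finiteness inside the open gap; to exclude accumulation at $\pm m$ one needs the existence of the norm limits $C_{\pm m,s}$ of $C_{z,s}$ together with an additional compactness/orthogonality argument for the would-be eigenfunctions, not a mere norm bound.

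The serious gap is (3), exactly where you anticipated trouble, and the Schur-complement fix cannot work. With $P_\pm=\frac12(1_4\pm\beta)$ one has $P_\pm\alpha_jP_\pm=0$, so the diagonal blocks of $C_{z,s}$ lose the Cauchy-type singularity: $P_\pm C_{z,s}P_\pm=(z\pm m)\,S_z P_\pm$, where $S_z$ is the positive, compact single-layer operator with kernel $e^{-\sqrt{m^2-z^2}|x-y|}/(4\pi|x-y|)$. Hence the block you propose to ``absorb'', $1+(\tau+\eta)(z+m)S_z$, is not invertible uniformly in $\tau+\eta$: once $\tau+\eta$ is sufficiently negative it acquires a kernel for some $z\in(-m,m)$, so its inverse cannot enter a Schur complement with any uniform control. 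Worse, in the test case $\tau=\eta$ (so $|\tau-\eta|=0$ and $\tau^2-\eta^2=0\neq4$) the operator $(\tau 1_4+\eta\beta)C_{z,s}=(\tau+\eta)P_+C_{z,s}$ is block triangular and the Birman--Schwinger condition reduces exactly to $1+(\tau+\eta)(z+m)\mu(z)=0$ for some eigenvalue $\mu(z)>0$ of $S_z$; since $(z+m)\mu_1(z)$ runs continuously from $0$ at $z=-m$ to a positive value at $z=m$, such a $z$ exists as soon as $\tau+\eta$ is negative enough, producing genuine eigenvalues in the gap even though $|\tau-\eta|$ is small. So no estimate of the kind you describe can yield the conclusion from smallness of only one of $|\tau\pm\eta|$: what your uniform bound $\sup_{z\in(-m,m)}\|C_{z,s}\|<\infty$ honestly gives, via a Neumann series, is emptiness of the discrete spectrum when $\|\tau 1_4+\eta\beta\|=\max(|\tau+\eta|,|\tau-\eta|)$ is small, i.e.\ both quantities small; that is the smallness statement your argument can reach, and the ``or'' reading of the hypothesis is beyond it.
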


Other consequences of the resolvent formula \eqref{eqn:resform} can be found {\it e.g.} in \cite{BEHL18}. See for instance the study of the non-relativistic limit \cite[Theorem 1.3.]{BEHL18} or the completeness of the wave operator for the scattering system $\{D_{\tau,0},D_{\rm free}\}$, see \cite[Theorem 1.2.]{BEHL18}.
\subsection{Spectral asymptotics}\label{subsec:spec-asympt}
In this paragraph we discuss the main results obtained in \cite{HOBP18} where the authors investigate the Dirac operator coupled with a Lorentz-scalar interaction of strength $\eta\in \RR$ denoted $D_{0,\eta}$ in Definition \ref{def:final}. In order to insist on the dependance of this operator on the mass parameter, in this paragraph, we set
\[
D_{\eta}(m) := D_{0,\eta}.
\]
By Theorem \ref{thm:safinal}, we know that this operator is self-adjoint. The following theorem can be found, {\it e.g.}, in \cite[ Theorem 2.3. \& Proposition 3.6.]{HOBP18}.

\begin{thm} Let $\eta \in \RR \setminus \{\pm 2\}$.
\begin{enumerate}[label=(\roman*)]
\item The essential spectrum of $D_{\eta}(m)$ is given by
\[
	Sp_{ess}\big(D_{\eta}(m)\big) = \big(-\infty,-|m|\big] \cup \big[|m|,+\infty\big).
\]
\item\label{itm:fin3} The spectrum of $D_{\eta}(m)$ is symmetric with respect to the origin.
\item The discrete spectrum of $D_{\eta}(m)$ is finite and each eigenvalue has an even multiplicity.
\item\label{itm:fin4}The operator $D_\eta(m)$ is unitarily equivalent to $D_{-\eta}(-m)$.
\item\label{itm:fin5} If $m\tau>0$, we have $Sp_{dis}\big(D_{\eta}(m)\big) = \emptyset$.
\end{enumerate}
\label{thm:specfeat1}
\end{thm}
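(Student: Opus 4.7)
The plan is to treat the five items by exhibiting suitable Clifford-algebraic intertwiners and exploiting the resolvent formula \eqref{eqn:resform}. For \textbf{(i)}, the perturbation in the resolvent identity factors through $L^2(\Sigma,\CC^4)$ via the boundary operators $\Phi_z,\Phi_z^*$ and the bounded inverse on $\Sigma$; by the enhanced mapping properties recalled after \eqref{eqn:resform}, the adjoint $\Phi_z^*$ sends $L^2(\RR^3,\CC^4)$ into $H^{1/2}(\Sigma,\CC^4)$, and since $\Sigma$ is compact the embedding $H^{1/2}(\Sigma,\CC^4)\hookrightarrow L^2(\Sigma,\CC^4)$ is compact (Rellich). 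Thus the resolvent difference is compact and Weyl's theorem gives $\operatorname{Sp}_{ess}(D_\eta(m))=\operatorname{Sp}_{ess}(D_{\rm free})=(-\infty,-|m|]\cup[|m|,+\infty)$.

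For \textbf{(iv)} I would use the chiral matrix $\gamma_5:=\begin{pmatrix}0&1_2\\1_2&0\end{pmatrix}$, which satisfies $\gamma_5\alpha_j=\alpha_j\gamma_5$ and $\gamma_5\beta=-\beta\gamma_5$. The pointwise multiplication $U:u\mapsto\gamma_5 u$ is a unitary involution on $L^2(\RR^3,\CC^4)$ that commutes with $\alpha\cdot n$ and anticommutes with $\beta$, so it sends the transmission condition for $D_{0,\eta}$ to the one for $D_{0,-\eta}$ and conjugates $-i\alpha\cdot\nabla+m\beta$ into $-i\alpha\cdot\nabla-m\beta$; hence $UD_\eta(m)U^{-1}=D_{-\eta}(-m)$. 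For \textbf{(ii)} the same mechanism works with $V:u\mapsto\gamma_5\beta\, u$: one checks that $\gamma_5\beta$ is unitary, anticommutes with each $\alpha_j$ and with $\beta$, so conjugation by $V$ reverses the sign of both $-i\alpha\cdot\nabla$ and $\beta$; a short bookkeeping shows that the two sign flips compensate in the transmission condition, so $V$ preserves the domain and yields $VD_\eta(m)V^{-1}=-D_\eta(m)$, forcing $\operatorname{Sp}(D_\eta(m))=-\operatorname{Sp}(D_\eta(m))$.

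For \textbf{(iii)}, finiteness of the discrete spectrum in the gap $(-|m|,|m|)$ follows from the Birman--Schwinger reformulation encoded in \eqref{eqn:resform}: $\lambda$ is an eigenvalue iff $1_4+\eta\beta C_{\lambda,s}$ fails to be invertible, which by the analytic Fredholm alternative applied to the analytic family of compact perturbations of the identity $\lambda\mapsto\eta\beta C_{\lambda,s}$ can only happen on a discrete set, and compactness of $\Sigma$ together with the behaviour of $C_{\lambda,s}$ at the thresholds rules out accumulation at $\pm|m|$. Even multiplicity is Kramers degeneracy, for which I would exhibit an antiunitary involution $T:=i\alpha_2\gamma_5 K$ (with $K$ complex conjugation). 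The matrix $\gamma_T:=i\alpha_2\gamma_5$ squares to $-1_4$, commutes with $\beta$, and one verifies the key identity $(\alpha\cdot n)\gamma_T=-\gamma_T\,\overline{(\alpha\cdot n)}$, which is precisely what cancels the conjugation acting on the transmission condition, so that $T$ preserves the domain; a direct computation then gives $TD_\eta(m)T^{-1}=D_\eta(m)$. Each eigenspace of $D_\eta(m)$ is thus $T$-invariant with $T^2=-\mathrm{Id}$, hence carries a quaternionic structure and has even complex dimension.

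For \textbf{(v)}, reading the hypothesis as $m\eta>0$ and using (iv) we may reduce to $m,\eta>0$. My strategy is an energy estimate. For $u\in\cD(D_\eta(m))\subset H^1(\Omega_+,\CC^4)\oplus H^1(\Omega_-,\CC^4)$, using $D_{\rm free}^2=-\Delta+m^2$ on each $\Omega_\pm$ and integration by parts one obtains
\[
\|D_\eta(m)u\|^2=\|\nabla u\|^2+m^2\|u\|^2+\mathcal{B}(u),
\]
where $\mathcal{B}(u)$ is a boundary integral over $\Sigma$ quadratic in the traces $u_\pm|_\Sigma$. Substituting the transmission identity $i(\alpha\cdot n)(u_--u_+)|_\Sigma=\tfrac{\eta}{2}\beta(u_++u_-)|_\Sigma$ rewrites $\mathcal{B}(u)$ as an integral whose sign is governed by the product $m\eta$; under $m\eta>0$ it is nonnegative and yields $\|D_\eta(m)u\|^2\geq m^2\|u\|^2$, excluding eigenvalues in $(-m,m)$. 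The main obstacle I anticipate is precisely this bookkeeping step: although conceptually clean, extracting the correct sign of $\mathcal{B}(u)$ requires careful algebra with the anticommutation relations $\{\alpha\cdot n,\beta\}=0$ and the decomposition of $u|_\Sigma$ into its mean $\tfrac12(u_++u_-)$ and its jump $(u_--u_+)$, and it is this computation that singles out $m\eta>0$ as the sharp threshold below which bound states in the gap are forbidden.
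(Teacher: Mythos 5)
You should first note that the survey does not actually prove Theorem \ref{thm:specfeat1}; it quotes it from \cite{HOBP18}, so your proposal has to be judged against the methods of that reference. Items (i), (ii), (iv) and the even-multiplicity half of (iii) are correct and in line with the standard arguments: the perturbation term in \eqref{eqn:resform} factors through the compact embedding $H^{1/2}(\Sigma,\CC^4)\hookrightarrow L^2(\Sigma,\CC^4)$, so Weyl's theorem gives (i); and your Clifford algebra checks out ($\gamma_5$ commutes with the $\alpha_j$ and anticommutes with $\beta$, hence maps the transmission condition with $\eta$ to the one with $-\eta$ and flips the mass, giving (iv); $\gamma_5\beta$ anticommutes with all four matrices, leaves the transmission condition invariant and anticommutes with $D_\eta(m)$, giving (ii); $T=i\alpha_2\gamma_5 K$ is antiunitary, $T^2=-1$, commutes with $D_\eta(m)$ and preserves the domain thanks to $(\alpha\cdot n)\gamma_T=-\gamma_T\overline{(\alpha\cdot n)}$, so Kramers degeneracy yields even multiplicities). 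You also correctly read the hypothesis of (v) as $m\eta>0$.

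Two steps are genuinely gapped. For the finiteness claim in (iii): $C_{\lambda,s}$ is a Cauchy-type singular integral operator on $\Sigma$, bounded but \emph{not} compact on $L^2(\Sigma,\CC^4)$ (for $C^2$ shells only objects such as the anticommutator $K=\{C_s,i\alpha\cdot n\}$ of \eqref{def:anticom}, or differences $C_{\lambda,s}-C_{\mu,s}$, are smoothing/compact), so the phrase ``analytic family of compact perturbations of the identity'' is not available as stated; one must first establish Fredholmness of $1_4+\eta\beta C_{\lambda,s}$ (e.g.\ via algebraic identities of the type $\big(C_{\lambda,s}(\alpha\cdot n)\big)^2=-\tfrac14$), and the non-accumulation of eigenvalues at the thresholds $\pm m$, which you merely assert, is precisely the nontrivial content of the finiteness proofs in \cite{HOBP18,BEHL19} (continuity of $\lambda\mapsto C_{\lambda,s}$ up to $\lambda=\pm m$ plus an argument there). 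For (v), the bookkeeping as you set it up would fail: if you expand $\|D_\eta(m)u\|^2$ as $\|\nabla u\|^2+m^2\|u\|^2+\mathcal{B}(u)$ with the full gradient, then $\mathcal{B}$ unavoidably contains the mean-curvature contributions $\pm\int_\Sigma M|u_\pm|^2\,ds$ (exactly as in the survey's displayed formula for $q_m$ in \S\ref{subsec:spec-asympt}), and these are not sign-definite, so the sign of $\mathcal{B}$ is not governed by $m\eta$ alone. The correct short route is to \emph{not} expand the kinetic term: keep $\|\alpha\cdot\nabla u\|^2_{L^2(\RR^3\setminus\Sigma)}\ge 0$ and integrate by parts only the cross term $2m\,\mathrm{Re}\,\langle -i\alpha\cdot\nabla u,\beta u\rangle$, which produces the boundary term
\[
m\int_\Sigma\Big(u_+^{*}\,i(\alpha\cdot n)\beta\,u_+ - u_-^{*}\,i(\alpha\cdot n)\beta\,u_-\Big)\,ds .
\]
Since $A:=i(\alpha\cdot n)\beta$ is a Hermitian involution and the transmission condition gives $u_--u_+=-\tfrac{\eta}{2}A(u_++u_-)$, this boundary term equals $\tfrac{m\eta}{2}\int_\Sigma|u_++u_-|^2\,ds\ge 0$ when $m\eta\ge0$, whence $\|D_\eta(m)u\|^2\ge m^2\|u\|^2$ and the gap $(-|m|,|m|)$ contains no spectrum, which is the sharp statement of (v).
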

\begin{rem} The condition $\eta \in \RR \setminus \{\pm 2\}$ is not a restriction. Indeed, as discussed in Remark \ref{rmk:confinement}, the operator uncouples as the direct sum of two Dirac operators with MIT bag boundary condition. In \cite{ALTR17}, the spectral properties of this model are investigated which provides a complete picture for all $\eta \in \RR$.
\end{rem}
By Theorem \ref{itm:fin5}-\ref{thm:specfeat1} the only interesting spectral feature that may happen in the gap $(-m,m)$ is when $m\tau < 0$ and by Theorem \ref{itm:fin4}-\ref{thm:specfeat1}, without loss of generality, we can pick $\tau < 0$ and $m >0$. The main result of \cite{HOBP18} reads as follows.

\begin{thm}\label{thm:HOBP} Let $\eta < 0$ with $\eta \neq 2$. The following Weyl-type asymptotics holds:
\[
	\# Sp_{dis}\big(D_\eta(m)\big) = \frac{16}{\pi} \frac{\tau^2}{(\tau^2 + 4)^2}|\Sigma|m^2 + \mathcal{O}\big(m\ln(m)\big),\quad m\to +\infty.
\]
Moreover, if $\pm \mu_k(m)$ denote the eigenvalues of $D_\eta(m)$ with $\mu_k(m) \geq 0$ enumerated in the non-decreasing order, then for each $k\in\NN$ there holds
\[
	\mu_k(m) = \frac{|\tau^2 - 4|}{\tau^2 +4}m + \frac{\tau^2 + 4}{|\tau^2-4|}\frac{E_k(\Upsilon_\tau)}{2m} + \mathcal{O}\Big(\frac{\ln(m)}{m^2}\Big),\quad m\to+\infty.
\]
Here, $E_k(\Upsilon_\tau)$ is the $k$-th eigenvalue of an m-independent Schr\"odinger operator $\Upsilon_\tau$ with an external Yang-Mills potential in $L^2(\Sigma,\CC^2)$
\[
	\Upsilon_\tau = \Big(d + i \frac{4}{\tau^2 +4}\omega\Big)^*\Big(d + i \frac{4}{\tau^2 +4}\omega\Big) - \Big(\frac{\tau^2 - 4}{\tau^2 + 4}\Big)^2 M^2 1_{2} + \frac{\tau^4 + 16}{(\tau^2 + 4)^2} K 1_2,
\]
where $K$ and $M$ are the Gauss and mean curvature, respectively. The $1$-form $\omega$ is given by the local expression
\[
	\omega := \sigma \cdot (n \times\partial_1n)ds_1 + \sigma\cdot(n\times\partial_2n)ds_2.
\]
\end{thm}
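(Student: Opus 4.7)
The plan is to reduce the eigenvalue problem for $D_\eta(m)$ in the gap $(-m,m)$ to a spectral problem on the closed surface $\Sigma$ via a Birman-Schwinger reduction stemming from the resolvent formula \eqref{eqn:resform}. In the pure Lorentz-scalar case $\tau=0$, formula \eqref{eqn:resform} rewrites as $(D_{0,\eta}-\lambda)^{-1} = (D_{\rm free}-\lambda)^{-1} - \eta\,\Phi_\lambda\big((1_4+\eta\beta C_{\lambda,s})^{-1}\beta\big)\Phi_\lambda^*$, so $\lambda\in(-m,m)$ is a discrete eigenvalue of $D_\eta(m)$ if and only if $-1/\eta$ lies in the spectrum of the mass-dependent Cauchy-type operator $\beta\,C^m_{\lambda,s}$ on $L^2(\Sigma,\CC^4)$, where the superscript $m$ emphasizes the dependence of the fundamental solution of $D_{\rm free}-\lambda$ on the mass. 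The overall idea is to expand $C^m_{\lambda,s}$ as $m\to+\infty$, identify a limiting self-adjoint operator on $\Sigma$, and then transfer its Weyl asymptotics to $D_\eta(m)$.

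Guided by the conjectured position of the eigenvalues, I would introduce a rescaled spectral parameter $E$ through $\lambda = \frac{|\tau^2-4|}{\tau^2+4}\,m + \frac{\tau^2+4}{|\tau^2-4|}\,\frac{E}{2m}$ and expand the kernel of $C^m_{\lambda,s}$ in powers of $1/m$. Since the Bessel-type kernel of $\phi^m_\lambda$ decays exponentially on the length scale $m^{-1}$, the operator is, to leading order, purely multiplicative on $\Sigma$, and this leading order is exactly what fixes the macroscopic position $\frac{|\tau^2-4|}{\tau^2+4}\,m$ of the eigenvalues. Tangential derivatives enter only at the next order, where they assemble into an elliptic differential operator of order two on $\Sigma$: commuting $(\alpha\cdot n)$ through $C^m_{\lambda,s}$ — in the spirit of the anticommutator analysis leading to \eqref{def:anticom} — produces the Yang-Mills connection $d+i\,\tfrac{4}{\tau^2+4}\,\omega$, while the second-order Taylor expansion of $\Sigma$ in tubular coordinates furnishes the mean-curvature term $M$ and the Gauss-curvature term $K$. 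A careful spinorial diagonalization at the threshold $\lambda/m\to \frac{|\tau^2-4|}{\tau^2+4}$ then projects the $\CC^4$-valued problem onto a rank-two subbundle and produces precisely the Schr\"odinger operator $\Upsilon_\tau$ on $L^2(\Sigma,\CC^2)$, the factor-two reduction accounting for the even multiplicity stated in Theorem \ref{thm:specfeat1}~(iii).

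Once this reduction is established with an operator-norm remainder of order $\cO(\ln(m)/m^2)$, the two asymptotics follow by classical tools. The individual eigenvalue expansion $\mu_k(m) = \frac{|\tau^2-4|}{\tau^2+4}m + \frac{\tau^2+4}{|\tau^2-4|}\frac{E_k(\Upsilon_\tau)}{2m}+\cO(\ln(m)/m^2)$ is obtained via the min-max principle applied to the effective operator $\Upsilon_\tau$, which is self-adjoint, bounded below, and has compact resolvent on the closed surface $\Sigma$. The Weyl-type counting function follows from the standard two-dimensional Weyl law for $\Upsilon_\tau$, combined with a careful tracking of the change of variables $\lambda\leftrightarrow E$ for the two symmetric families accumulating near $\pm\frac{|\tau^2-4|}{\tau^2+4}\,m$ (symmetry being ensured by Theorem \ref{thm:specfeat1}~\ref{itm:fin3}); matching constants yields the prefactor $\frac{16}{\pi}\frac{\tau^2}{(\tau^2+4)^2}|\Sigma|m^2$. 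The main obstacle of the proof is the rigorous control of the semiclassical expansion of the singular integral operator $C^m_{\lambda,s}$, uniformly in $\lambda$ across the whole gap: one must handle the non-integrable diagonal singularity of the kernel, prove sharp mapping estimates between Sobolev spaces on $\Sigma$ that are uniform in $m$, and track the logarithmic loss that ultimately generates the $\ln(m)$ factor in the remainders. The exclusion $\eta\neq\pm 2$ is inherent to this reduction: at these critical thresholds the leading factor $|\tau^2-4|/(\tau^2+4)$ degenerates and the confinement mechanism of Remark \ref{rmk:confinement} takes over, forcing a genuinely different analysis handled separately via the MIT bag comparison.
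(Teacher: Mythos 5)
Your route is not the one the paper takes, and as written it has two genuine gaps. The paper does not perform a Birman--Schwinger reduction on the boundary integral operator at all: it exploits the symmetry of the spectrum (Theorem \ref{thm:specfeat1}) to pass to the square $\big(D_\eta(m)\big)^2$, writes out its quadratic form $q_m$ explicitly (a shifted Laplacian off $\Sigma$ plus boundary terms), localizes to a tubular neighborhood $\Omega_\delta$ with $\delta=\delta(m)$, and squeezes $q_m$ between two comparison forms $q_m^\pm$ by Dirichlet--Neumann bracketing; the effective operator $\Upsilon_\tau$ then emerges from separating the transverse variable and projecting onto the first transverse mode, and the counting asymptotics comes from counting the eigenvalues of the bracketing operators below $m^2$. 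This is precisely what lets the authors control \emph{all} eigenvalues in the gap at once, not only those with fixed index $k$.

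The first gap in your proposal is the assertion that $C^m_{\lambda,s}$ is ``to leading order purely multiplicative'' as $m\to+\infty$. The kernel of $\phi^m_\lambda$ contains the Cauchy-type singularity $\alpha\cdot(x-y)/|x-y|^3$ whose strength is independent of $m$; the large parameter only localizes the exponentially decaying envelope $e^{-\sqrt{m^2-\lambda^2}|x-y|}$. Consequently $C^m_{\lambda,s}$ remains a genuine zeroth-order singular integral operator (non-compact, non-multiplicative) uniformly in $m$ --- this is exactly why the anticommutator $K$ of \eqref{def:anticom}, and not $C_s$ itself, is the smoothing object. Your expansion must therefore be organized around the $m$-independent principal part rather than around a multiplication operator, and extracting a second-order elliptic effective operator from such a family is a substantial piece of analysis that the proposal only names as ``the main obstacle''. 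The second gap is the passage from the two-term expansion of $\mu_k(m)$ (valid for fixed $k$, with a remainder $\cO(\ln(m)/m^2)$) to the global count $\#Sp_{dis}\sim \frac{16}{\pi}\frac{\tau^2}{(\tau^2+4)^2}|\Sigma|m^2$: the latter requires controlling eigenvalues with index $k$ as large as $\operatorname{const}\cdot m^2$, i.e.\ uniformity of the effective-operator description across the entire gap, which a band-edge expansion does not give. (A minor further point: the resolvent formula \eqref{eqn:resform} is stated for $z\in\CC\setminus\RR$, so the Birman--Schwinger equivalence for real $\lambda\in(-m,m)$ needs a separate justification.) The paper's bracketing construction is designed precisely to circumvent both difficulties, since $q_m^\pm$ are honest self-adjoint comparison operators whose full spectra below $m^2$ can be computed.
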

The Weyl asymptotics of Theorem \ref{thm:HOBP} justifies that the larger the mass $m$, the more eigenvalues are created in the gap $(-m,m)$. From a physical point of view, in the regime $m\to +\infty$, the system behaves at first order as the one for particles constrained to live on the shell $\Sigma$ driven by the effective hamiltonian $\Upsilon_\tau $. This hamiltonian is a geometric object as it is directly seen by the expression of the Yang-Mills potential. It is reminiscent of the recent work \cite{MOBP17}, where for the MIT bag Dirac operator (obtained by setting $\tau = \pm 2$, see Remark \ref{rmk:confinement}) the authors obtain an effective operator given by the square of the intrinsic Dirac operator on the shell $\Sigma$ (see {\it e.g.} \cite[Theorem 1]{MOBP17}). It leads to the following question.

\begin{oppb}
By analogy with \cite[Theorem 1]{MOBP17}, the effective operator $\Upsilon_\tau$ looks like the square of a Dirac operator with a {\it twisted} spin connection. Can its meaning be clarified ?
\end{oppb}

To conclude this paragraph, let us say a few words about the strategy used to prove Theorem \ref{thm:HOBP}. As the spectrum of $D_\eta(m)$ is symmetric with respect to the origin (see Theorem \ref{itm:fin3}-\ref{thm:specfeat1}), one can focus on the spectrum of the square $\big(D_\eta(m)\big)^2$. This is done by using a variational characterization of the eigenvalues of $\big(D_\eta(m)\big)^2$ {\it via} the min-max principle. Namely, for $u\in \cD(q_m) := \cD\big(D_\eta(m)\big)$, there holds:
\begin{align*}
	q_m(u) &:= \langle D_\eta(m) u, D_\eta(m) u\rangle_{L^2(\RR^3,\CC^4)}\\ &= \int_{\RR^3\setminus \Sigma} |\nabla u|^2 dx + m^2 \int_{\RR^3} |u|^2 dx\\
	&\quad\quad\quad\quad+ \frac{2m}\tau \int_{\Sigma}|u_+ - u_-|^2 ds + \int_{\Sigma}M|u_+|^2ds - \int_{\Sigma}M|u_-|^2 ds,
\end{align*}
where we have used the identification {${L^2(\RR^3,\CC^4)\ni u = u_+ \oplus u_- \in L^2(\Omega_+,\CC^4) \oplus L^2(\Omega_-,\CC^4)}$}. Remark that away from the shell $\Sigma$, this is the quadratic form of a shifted Laplacian and that the $\delta$-shell interaction manifests {\it via} the boundary terms.

Then, the strategy is rather standard, though technically involved because functions in the form domain $\cD(q_m)$ satisfy a specific transmission condition.
First, for $\delta >0$ sufficiently small one constructs a tubular neighborhood of the shell $\Sigma$ as
\[
	\Omega_\delta := \{ x_\Sigma - t n(x_\Sigma) : x_\Sigma \in \Sigma, t \in (-\delta,\delta)\}.
\]
Second, using Dirichlet and Neumann bracketting techniques, the unitary map
\begin{multline*}
	U : L^2(\Omega_\delta) \to L^2(\Sigma\times(-\delta,\delta)),\\ (Uu)\big(x_\Sigma - tn(x_\Sigma)\big) := \sqrt{\det\big(G(x_\Sigma,t)\big)}u\big(x_\Sigma - t n(x_\Sigma)\big) := v(x_\Sigma,t),
\end{multline*}
where $G$ is the metric on $\Sigma \times (-\delta,\delta)$ induced by the change of variable in tubular coordinates and remarking that the volume form on $\Sigma\times(-\delta,\delta)$ satisfies
\[
	\sqrt{\det \big(G(x_\Sigma,t)\big)} dx_\Sigma dt = \big(1-2tM(x_\Sigma) + t^2K(x_\Sigma)\big) ds dt,
\]
one can bound from above and below the quadratic form $\tilde{q}_m(v) := q_m(U^{-1}v)$ by quadratic forms $q_m^\pm$ defined on a form domain included in $U\cD(q_m)$. It reads
\[
	q_m^-(v) \leq \tilde{q}_m(v) \leq q_m^+(v).
\]
Hence, for the sequence of min-max levels associated to each quadratic form one gets (see \cite[Lemma 4.10]{HOBP18}):
\[
	E_k(q_m^-) \leq E_k(q_m) \leq E_k(q_m^+),\quad k \in \NN.
\]

The proof then relies on the obtention of a suitable upper bound for $q_m^+$ and a suitable lower bound for $q_m^-$. Let us focus on the simplest case, the upper bound.

For some constant $c>0$, we have:
\begin{align}
	q_m^+(v) &:= \int_{\Sigma\times(-\delta,\delta)} \Big((1+c\delta) \|\nabla_{x_\Sigma} u\|_{T_{x_\Sigma}\Sigma\otimes\CC^4} + (K-M^2+c\delta)|u|^2\Big) ds dt\nonumber\\&\quad\quad\quad + \int_{\Sigma}\Big(\int_{-\delta}^\delta |\partial_t u|^2 dt + \frac{2m}{\tau}|u(\cdot,0^+) - u(\cdot,0^-)|^2\Big)ds,
\label{align:fin}
\end{align}
where $v \in H^1(\Sigma\times(-\delta,\delta),\CC^4)$, satisfies a transmission condition at the shell $\Sigma$, inherited from the one for functions in the domain $\cD\big(D_\tau(m)\big)$ and a Dirichlet boundary conditions at $\{(x_\Sigma, \pm \delta) : x_\Sigma \in \Sigma\}$.

Now, picking a well chosen dependance of $\delta$ on the mass $m$, we set $\delta := \delta(m)$. One constructs trial functions as a product of the first mode of the transverse operator in the only variable $t$ which appears on the last line of \eqref{align:fin} and a function of the form $v_+(x_\Sigma) \mathds{1}_{t>0} + v_-(x_\Sigma)\mathds{1}_{t<0}$ where $v_+$ and $v_-$ are related in order for the trial functions to be in the form domain of $q_m^+$. Finally, one obtains the sought upper bound.

The lower bound is more involved because one has to control commutators of the surface gradient $\nabla_{x_\Sigma}$ with the projections on the eigenspace of the lowest eigenvalue of a transverse operator in the variable $t$ and its orthogonal, see {\it e.g.} \cite[\S 4.5.]{HOBP18} for details.
\section*{Acknowledgements}
T.O.-B. is supported by the ANR ``D\'efi des autres savoirs (DS10) 2017" programm,
reference ANR-17-CE29-0004, project molQED.

F. P. is supported by the European Research Council (ERC) under the European Union’s Horizon 2020 research and innovation program (grant agreement MDFT No 725528 of  Mathieu Lewin).

\end{document}